\DeclareFixedFont{\MyTitleFont}{OT1}{\familydefault}{m}{n}{18pt}
\DeclareFixedFont{\MyAuthorFont}{OT1}{\familydefault}{m}{n}{13pt}
\DeclareFixedFont{\MyAbstractTitleFont}{OT1}{\familydefault}{m}{it}{12pt}
\DeclareFixedFont{\MyAbstractFont}{OT1}{\familydefault}{m}{it}{11pt}
\DeclareFixedFont{\MySubtitleFont}{OT1}{\familydefault}{m}{n}{15pt}
\DeclareFixedFont{\MySubSubtitleFont}{OT1}{\familydefault}{m}{n}{13pt}
\DeclareFixedFont{\MySubSubSubtitleFont}{OT1}{\familydefault}{m}{n}{11pt}
\DeclareFixedFont{\MyTextFont}{OT1}{\familydefault}{m}{n}{11pt}
\title{\MyTitleFont Simplified Analysis on Filtering Sensitivity Trade-offs in Continuous- and Discrete-Time Systems   \vspace{0em}}
\author{\MyAuthorFont Neng Wan$^1$, Dapeng Li$^2$, Lin Song$^1$, and Naira Hovakimyan$^1$}
\date{}
\newtheorem{theorem}{Theorem}
\newtheorem{remark}{Remark}
\newtheorem{lemma}{Lemma}
\begin{document}

%===============Show the title ==================
\maketitle

%Thanks & Footnote ================================================

\footnotetext[0]{$^{1}$N. Wan, L. Song, and N. Hovakimyan are with the Department of Mechanical Science and Engineering and Coordinated Science Laboratory, University of Illinois at Urbana-Champaign, Urbana, IL 61801, USA (e-mail: \{nengwan2, linsong2, nhovakim\}@illinois.edu).}
\footnotetext[0]{$^{2}$D. Li is with the Anker Innovations, Shenzhen, Guangdong 518000, China (e-mail: dapeng.ustc@gmail.com).}
\vspace{-4em}

%===================== Section 0 Abstract ===================
\begin{abstract}
{\vspace{0.5em}
	A simplified analysis is performed on the Bode-type filtering sensitivity trade-off integrals, which capture the sensitivity characteristics of the estimate and estimation error with respect to the process input and estimated signal in continuous- and discrete-time linear time-invariant filtering systems. Compared with the previous analyses based on complex analysis and Cauchy's residue theorem, the analysis results derived from the simplified method are more explicit, thorough, and require less restrictive assumptions. For continuous-time filtering systems, our simplified analysis reveals that apart from the non-minimum phase zero sets reported in the previous literature, the value and boundedness of filtering sensitivity integrals are also determined by the leading coefficients, relative degrees, minimum phase zeros, and poles of plants and filters. By invoking the simplified method, a comprehensive analysis on the discrete-time filtering sensitivity integrals is conducted for the first time. Numerical examples are provided to verify the validity and correctness of the simplified analysis.
}
\end{abstract}
%===============================================
\vspace{-0.5em}

\titleformat*{\section}{\centering\MySubtitleFont}
\titlespacing*{\section}{0em}{1.25em}{1.25em}[0em]

%===============================================
%================ Section 1 Introduction =================
%===============================================
\section{Introduction}\label{sec1}
%Introduction to the fundamental limitations and filtering trade-offs.
Filtering trade-offs or fundamental limitations of filtering, rooted in the physical limitations of filtering systems, capture the trade-offs between the pursuit of estimation performance and the constraints in filter design. Some identified filtering trade-offs include the filtering sensitivity integrals \cite{Goodwin_Auto_1995, Seron_CDC_1995, Braslavsky_CDC_1996, Goodwin_TAC_1997, Carrasco_Auto_2014}, lowest achievable estimation errors \cite{Braslavsky_Auto_1999, Li_Auto_2016, Wan_arxiv_2022}, and lower bounds on estimation error variance \cite{Fang_CDC_2017}. In this paper, we investigate the Bode-type filtering sensitivity integrals of continuous- and discrete-time linear time-invariant (LTI) systems via the simplified approach \cite{Wu_TAC_1992, Wan_TAC_2020}. In contrast to the existing studies, our simplified analysis provides more insights and explicit results, removes several restrictive assumptions on the continuous-time filtering sensitivity functions, and for the first time, presents the results of discrete-time filtering trade-off integrals.

%Current investigation on the filtering trade-offs and their challenges
Inspired by the sensitivity functions and trade-off integrals of LTI control systems \cite{Bode_1945, Freudenberg_1985, Sung_IJC_1988, Sung_IJC_1989, Middleton_1991}, Goodwin et al. \cite{Goodwin_Auto_1995, Goodwin_TAC_1997}, and Seron and Goodwin \cite{Seron_CDC_1995} put forward the filtering sensitivity functions and trade-off integrals for continuous-time LTI filtering systems, which measure the relative effects of process input and estimated signal on the estimate and estimation error at each frequency point and over the entire frequency range. Among these filtering sensitivity integrals, the Bode-type integrals \cite{Seron_2012}, integrals of logarithmic sensitivity functions on the entire frequency range, are candid in describing the sensitivity characteristics of filtering systems in frequency domain, but are not always bounded in continuous-time setting even with the aid of weighting function $1/\omega^2$. On the other hand, although the Poisson-type sensitivity integrals \cite{Goodwin_TAC_1997}, integrals of logarithmic sensitivity functions weighted by the Poisson kernel, are more tractable in boundedness, the physical meaning and instructions delivered by the Poisson integrals are not as straightforward as the Bode-type integrals. Meanwhile, similar to many early investigations of the control trade-off integrals \cite{Freudenberg_1985, Sung_IJC_1988, Sung_IJC_1989, Middleton_1991, Seron_2012}, all existing analyses of continuous-time filtering sensitivity integrals are based on complex analysis and Cauchy's residue theorem \cite{Braslavsky_CDC_1996, Goodwin_TAC_1997, Seron_2012}, which not only imposes multiple constraints on the analyticity, relative degree, and minimum phaseness of transfer and sensitivity functions, but conceals some factors that determine the values and boundedness of trade-off integrals in their final expressions. These limitations hindered our understanding and applications of Bode-type sensitivity integrals until the introduction of simplified method~\cite{Wu_TAC_1992}, which built on the elementary mathematical tools, shed more light on the analysis and interpretation of Bode-type sensitivity integrals. By resorting to this simplified method, Wu and Jonckheere \cite{Wu_TAC_1992} and Wan et al. \cite{Wan_TAC_2020} respectively studied the Bode integrals and complementary sensitivity Bode integrals in continuous- and discrete-time LTI control systems. Compared with the original studies in \cite{Freudenberg_1985, Sung_IJC_1988, Sung_IJC_1989, Middleton_1991}, their simplified analyses with milder assumptions are more comprehensive, and the results are more explicit and general in application.

%Summary and Contribution
Relying on the simplified method introduced in \cite{Wu_TAC_1992}, in this paper, we perform a simplified analysis on the Bode-type filtering sensitivity integrals of both continuous- and discrete-time LTI filtering systems. A general filtering configuration is established, and transfer functions of plants and filters are formulated under the minimum assumptions. Sensitivity functions and trade-off integrals are defined to characterize the relative impacts of process input or estimated signal on the estimate and estimation error. No restriction is imposed on the analyticity, relative degree, or minimum phaseness of transfer and sensitivity functions. For both continuous- and discrete-time filtering systems, we use the simplified method to derive the values and boundedness conditions of the trade-off integrals subject to different combinations of plants and filters, and show that apart from the non-minimum phase (NMP) zero sets reported in \cite{ Seron_2012}, the values and boundedness of Bode-type filtering trade-off integrals are also impacted by the relative degrees, leading coefficients, minimum phase (MP) zeros, and poles of plants and filters. Illustrative examples are provided at the end to examine and verify the theoretical analysis of this simplified method.

The remainder of this paper is organized as follows. \hyperref[sec2]{Section~II} introduces the preliminaries of the simplified method and formulates the filtering problems. \hyperref[sec3]{Sections~III} and \hyperref[sec4]{IV} respectively conduct the simplified analyses in continuous- and discrete-time filtering systems. \hyperref[sec5]{Section~V} presents the illustrative examples, and  \hyperref[sec6]{Section~VI} draws the conclusions. 

\vspace{0.5em}

\noindent \textit{Notations}: Throughout this paper, $\ln(\cdot)$ and $\log(\cdot)$ respectively denote the logarithms in base ${\rm e}$ and $2$. $\mathbb{C}^+$ represents the open right half-complex plane (ORHP). For a complex number $c$, $|c|$ stands for the modulus, and $|{\rm Re} \ c|$ denotes the absolute value of real part. Complex variables are denoted as $s = j\omega$ and $z = {\rm e}^{j\omega}$.

%===============================================
%======= Section 2 Preliminaries and Problem Fromulation ==========
%===============================================
\section{Preliminaries and Problem Formulation}\label{sec2}
To formulate the filtering problem, we follow the setup in \cite{Goodwin_TAC_1997, Seron_2012} and consider the general filtering system as follows. 
\begin{figure}[H]
	\centerline{\includegraphics[width=0.65\columnwidth]{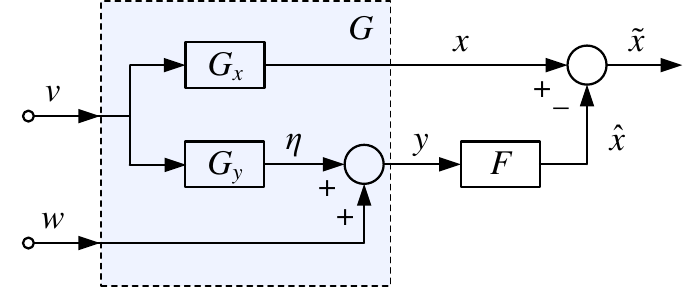}}
	\caption{General filtering configuration.}
	\label{fig1}
\end{figure}
\noindent Among the signals in \hyperref[fig1]{Fig.~1}, $v$ and $w$ are respectively the process and measurement inputs or noise; $\eta$ denotes the noise-free output; $y$ stands for the measured output; $x$ is the actual signal to be estimated, e.g., internal states $\xi$ of plant $G$ or noise-free output $\eta$; $\hat{x}$ is the estimate, and $\tilde{x} = x - \hat{x}$ is the estimation error. The plant model $G$ in \hyperref[fig1]{Fig.~1} is decomposed into two parts, $G_x$ and $G_y$, where $G_x$ denotes the transfer function from process input $v$ to estimated signal $x$, and $G_y$ represents the transfer function from process input $v$ to noise-free output $\eta$. Filtering process $F$ computes the estimate $\hat{x}$ from measured output $y$ and is assumed to be a bounded error estimator\footnote{A stable filter is a BEE if, for all initial states of the plant and filter, the estimation error $\tilde{x}$ is bounded whenever the inputs $v$ and $w$ are bounded.} (BEE), which is the minimum assumption that can be required from a filter \cite{Goodwin_TAC_1997}. For conciseness, in this paper, we primarily consider the single-input and single-output (SISO) systems, namely $v, w, x$ and $y \in \mathbb{R}$ in \hyperref[fig1]{Fig.~1}, and the multi-input and multi-output (MIMO) systems can be studied by rewriting our analysis into a matrix-vector form and following the MIMO framework presented in \cite{Braslavsky_CDC_1996, Seron_2012}. A state-space realization of the plant $G$ in \hyperref[fig1]{Fig.~1} is
\begin{equation}\label{state_space}
	\left[\begin{matrix}
		\dot{\xi}\\
		x\\
		y
	\end{matrix}\right] = \left[\begin{matrix}
		A & B & 0\\
		C_1 & D_1 & 0\\
		C_2 & D_2 & I\\
	\end{matrix}\right]
	\left[\begin{matrix}
		\xi\\
		v\\
		w
	\end{matrix}\right]
\end{equation}
where $\xi$ represents the internal states of plant $G$; the pair $(A, C_2)$ is assumed to be detectable, and $(A, B)$ is assumed to be stabilizable.

To quantify the sensitivity characteristics of the system in \hyperref[fig1]{Fig.~1}, we consider the filtering sensitivity functions \cite{Seron_2012}:
\begin{subequations}\label{pm_funs}
	\begin{align}
		P & = (G_x - FG_y) G_x^{-1} \label{p_fun}\\
		M & = FG_yG_x^{-1}, \label{m_fun}
	\end{align}
\end{subequations}
where the filtering sensitivity function $P$, which is the transfer function from the signal $x$ to the estimation error $\tilde{x}$, reflects the impact of the process input $v$ or estimated signal $x$ on the estimation error $\tilde{x}$, and the filtering sensitivity function $M$, which is the transfer function from the signal $x$ to the estimate $\hat{x}$, captures the relative effect of the process input $v$ or the estimated signal $x$ on the estimate $\hat{x}$. The integrals of logarithmic $|P|$ and $|M|$ over the frequency range, defined as the Bode-type filtering sensitivity integrals, are utilized as the trade-off metrics characterizing the sensitivity properties of the filtering systems in the frequency domain. Similar to the sensitivity function $S$ and complementary sensitivity function $T$ of LTI control systems \cite{Seron_2012, Wan_SCL_2019}, the pair of filtering sensitivity functions defined in \eqref{pm_funs} also satisfy the complementary condition 
\begin{equation}\label{comp_constraint}
	P + M = I
\end{equation}
at any complex frequency $s$ or $z$ that is not a pole of $P$ and $M$. To guarantee the well-definiteness of the filtering problem and the properness of the sensitivity functions in \eqref{pm_funs}, we impose the following  assumptions: A1)\label{ass1} $G_x$ is right-invertible; A2)\label{ass2} $F$ is proper and stable; A3)\label{ass3} $FG_yG_x^{-1}$ is proper, or equivalently, $G_yG_x^{-1}$ is proper, and A4)\label{ass4} $G_x - FG_y$ is stable.
\begin{remark}\label{rem1}
	\hyperref[ass1]{A1)} guarantees the well-posedness of the sensitivity function $P$ in \eqref{p_fun}. When the filtering system in \hyperref[fig1]{Fig.~1} is SISO, \hyperref[ass1]{A1)} always holds, while for the MIMO filtering systems, \hyperref[ass1]{A1)} requires that there exist at least as many process inputs as the signals to be estimated. \hyperref[ass2]{A2)} and \hyperref[ass3]{A3)} ensure the properness of transfer and sensitivity functions in \hyperref[fig1]{Fig.~1} and \eqref{pm_funs}. \hyperref[ass4]{A4)} is a necessary and sufficient condition for the filter $F$ to be a BEE \cite{Seron_2012}, which has two major implications --- i) unstable poles shared by $G_x$ and $G_y$ must be canceled by the NMP zeros of $G_x - FG_y$, and ii) unstable poles of $G_y$ that do not appear in $G_x$ must be canceled by the NMP zeros of $F$. Compared with the previous studies of control and filtering trade-off integrals, no restriction is laid on the analyticity, relative degree, or minimum phaseness of transfer and sensitivity functions.
\end{remark}
\noindent In the following subsections, the transfer functions, sensitivity functions, and trade-off integrals are formulated in the continuous- and discrete-time filtering systems subject to \hyperref[ass1]{A1)}-\hyperref[ass4]{A4)}, and some preliminaries of the Bode-type filtering sensitivity integrals and simplified method are respectively introduced.

\titleformat*{\subsection}{\MySubSubtitleFont}
\titlespacing*{\subsection}{0em}{0.75em}{0.75em}[0em]
\subsection{Continuous-Time Filtering Systems}
We first consider the continuous-time plant model $G_x(s) = K_x  \prod_{i=1}^{m_x}(s - z_i) / \prod_{i=1}^{n_x}(s-p_i)$, where $K_x \in \mathbb{R}$ is the leading coefficient; $z_i$ and $p_i$ are respectively the $i$-th zero and pole of $G_x(s)$, and $m_x \leq n_x$ such that $G_x(s)$ is proper. For continuous-time systems, we use $\bar{z}_i$ to denote the NMP zeros with positive real parts in $\{z_i\}_{i=1}^{m_x}$, and $\bar{p}_i$ to indicate the unstable poles with positive real parts in $\{ p_i \}_{i=1}^{n_x}$. This rule also applies to the zeros and poles with subscripts other than~$i$. The transfer function $G_x(s)$ can then be rewritten as
\begin{equation}\label{TF_Gx}
	G_x(s) = \frac{K_x \cdot \prod_{i=1}^{\bar{m}_x}(s - \bar{z}^{}_i)  \prod_{i=1}^{m_x - \bar{m}_x}(s - z_i)}{\prod_{i=1}^{\bar{n}_x}(s-\bar{p}^{}_{i})  \prod_{i=1}^{n_x - \bar{n}_x}(s-p^{}_{i})},
\end{equation}
where $\bar{m}_x$ and $\bar{n}_x$ respectively denote the amounts of NMP zeros and unstable poles in $G_x(s)$, and we can readily imply that $\prod_{i=1}^{m_x}(s - z_i) = \prod_{i=1}^{\bar{m}_x}(s - \bar{z}^{}_i) \cdot \prod_{i=1}^{m_x - \bar{m}_x}(s - z_i)$, and  $\prod_{i=1}^{n_x}(s-p_i) = \prod_{i=1}^{\bar{n}_x}(s-\bar{p}^{}_{i}) \cdot \prod_{i=1}^{n_x - \bar{n}_x}(s-p^{}_{i})$.

The plant model $G_y$ in \hyperref[fig1]{Fig.~1} is described by the continuous-time transfer function $G_y(s) = K_y \prod_{a=1}^{m_y}(s - z_a) / \prod_{a=1}^{n_y}(s - p_a)$, where $K_y \in \mathbb{R}$ denotes the leading coefficient; $\{z_a\}_{a=1}^{m_y}$ and $\{p_a\}_{a=1}^{n_y}$ respectively stand for the zeros and poles of $G_y(s)$, and $m_y \leq n_y$ such that $G_y(s)$ is proper. Considering the fact that $G_y(s)$ contains all the unstable poles of $G(s)$ and thus $G_x(s)$ on account of the stabilizability and detectability assumptions in \eqref{state_space}, we can then reformulate $G_y(s)$ as follows
\begin{equation}\label{TF_Gy}
	G_y(s) = \frac{K_y \cdot \prod_{a=1}^{m_y}(s - z_a)}{\prod_{i=1}^{\bar{n}_{x}}(s-\bar{p}^{}_{i}) \prod_{a=1}^{\bar{n}_y}(s - \bar{p}^{}_{a}) \prod_{a=1}^{n_y - \bar{n}_y-\bar{n}_x}(s-p_{a})},
\end{equation} 
where $\{ \bar{p}_a^{}\}_{a=1}^{\bar{n}_y^{}}$ denote the unstable poles that only exist in $G_y(s)$ and do not appear in $G_x(s)$, and we can imply that $\prod_{a=1}^{n_y}(s - p_a) = \prod_{i=1}^{\bar{n}_{x}}(s-\bar{p}^{}_{i}) \prod_{a=1}^{\bar{n}_y}(s - \bar{p}^{}_{a}) \prod_{a=1}^{n_y - \bar{n}_y-\bar{n}_x}(s-p_{a})$. The filter $F$ in \hyperref[fig1]{Fig.~1} is realized by the transfer function $F(s) = K^{}_f  \prod_{b=1}^{m_f}(s - z^{}_b) / \prod_{b=1}^{n_f}(z - p^{}_b)$, where the leading coefficient $K_f \in \mathbb{R}$; $\{z^{}_b\}_{b=1}^{m_f}$ and $\{p^{}_b\}_{b=1}^{n_f}$ respectively denote the zeros and poles of $F(s)$, and $m_f \leq n_f$ such that $F(s)$ is proper. Since the unstable poles $\{\bar{p}^{}_{a}\}_{a=1}^{\bar{n}_y}$ solely in $G_y(s)$ must be canceled by the NMP zeros of $F(s)$ by \hyperref[ass4]{A4)} and \hyperref[rem1]{Remark~1}, we can then rewrite $F(s)$ as
\begin{equation}\label{TF_F}
	F(s) = \frac{K^{}_f \cdot  \prod_{a=1}^{\bar{n}_y}(s - \bar{p}_{a}^{}) \prod_{b=1}^{m^{}_f - \bar{n}_y}(s - z_b^{})}{\prod_{b=1}^{n_f}(s - p^{}_b)},
\end{equation}
where $\{p_b^{}\}_{b=1}^{n_f}$ are all stable poles by \hyperref[ass2]{A2)}, and $\prod_{b=1}^{m_f}(s - z^{}_b) = \prod_{a=1}^{\bar{n}_y}(s - \bar{p}_{a}^{}) \prod_{b=1}^{m^{}_f - \bar{n}_y}(s - z_b^{})$. Since $G_y(s)$ and $F(s)$ always appear as a product $F(s)G_y(s)$ in \eqref{pm_funs}, substituting \eqref{TF_Gy} and \eqref{TF_F} into the product and canceling the common factors, we have
\begin{equation}\label{TF_FG}
	F(s) G_y(s) = \frac{K \cdot \prod_{j=1}^{m}(s - z^{}_j)}{\prod_{i=1}^{\bar{n}_x}(s - \bar{p}_{i}^{}) \prod_{j=1}^{n - \bar{n}_x}(s - p_j^{})},
\end{equation}
where $K = K_f  K_y$, $\prod_{j=1}^m(s - z_j) = \prod_{a=1}^{m_y}(s -  z^{}_a)   \prod_{b=1}^{m_f -  \bar{n}_y}  (s  -  z^{}_b)$, and $\prod_{j=1}^{n- \bar{n}_x}(s - p^{}_j) = \prod_{a=1}^{n_y - \bar{n}_x - \bar{n}_y}(s - p^{}_a)  \prod_{b = 1}^{n_f} (s - p_b^{})$. By resorting to  Cauchy's residue theorem, researchers have derived the following Bode-type integral constraints for the sensitivity functions $P(s)$ and $M(s)$ in continuous-time LTI filtering systems \cite{Seron_2012}. 
\begin{lemma}\label{lem1}
	For the continuous-time filtering systems in \hyperref[fig1]{Fig.~1}, suppose that $P(s)$ and $M(s)$ in \eqref{pm_funs} are proper, and $F(s)$ in \eqref{TF_F} is a BEE. Let $\mathcal{Z}_P = \{ \xi_i \in \mathbb{C}^+: P(\xi_i) = 0 \}$ be the set of NMP zeros of $P(s)$, and $\mathcal{Z}_x = \{ \zeta_i \in \mathbb{C}^+:  G_x(\zeta_i) = 0 \ \textrm{and } F(\zeta_i)G_y(\zeta_i) \neq 0 \}$. If $P(j\infty) \neq 0$, the Bode-type integral of sensitivity function $P(s)$ then satisfies
	\begin{equation}\label{lem1_eq1}
			\hspace{-5pt}\frac{1}{2\pi}\int_{-\infty}^{\infty} \ln \left| \frac{P(j\omega)}{P(j \infty)}  \right|d\omega = \lim\limits_{s \rightarrow \infty} \frac{s[P(s) - P(\infty)]}{2P(\infty)} + \sum_{\xi_i \in \mathcal{Z}_P} \xi^{}_i - \sum_{\zeta_i \in \mathcal{Z}_x} \zeta^{}_i.
	\end{equation}
	Let $\mathcal{Z}_M = \{\delta_i \in \mathbb{C}^{+}: M(\delta_i) = 0\}$ be the set of NMP zeros of $M(s)$. If $M(0) \neq 0$, the Bode-type integral of the sensitivity function $M(s)$ then satisfies
	\begin{equation}\label{lem1_eq2}
			\frac{1}{2\pi}\int_{-\infty}^{\infty} \ln \left|  \frac{M(j\omega)}{M(0)}  \right|\frac{d\omega}{\omega^2}  = \frac{1}{2M(0)} \lim_{s\rightarrow 0}\frac{dM(s)}{ds}  + \sum_{\delta_i \in \mathcal{Z}_M} \delta^{-1}_i - \sum_{\zeta_i \in \mathcal{Z}_x} \zeta^{-1}_i.
	\end{equation}
\end{lemma}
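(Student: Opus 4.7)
The plan is to prove both identities via the argument principle applied to suitably weighted logarithmic derivatives on a contour enclosing $\mathbb{C}^+$, thereby circumventing the branch-cut subtleties of integrating $\ln P$ or $\ln M$ directly.

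First, substitute the factorizations \eqref{TF_Gx}--\eqref{TF_FG} so that $P(s) = [G_x(s) - F(s)G_y(s)]/G_x(s)$ and $M(s) = F(s)G_y(s)/G_x(s)$ become explicit rational functions. The BEE condition \hyperref[ass4]{A4)} together with \hyperref[rem1]{Remark~1} identifies the NMP poles of both $P$ and $M$ as precisely the set $\mathcal{Z}_x$, while their NMP zeros are $\mathcal{Z}_P$ and $\mathcal{Z}_M$ respectively. Let $\tilde P := P/P(\infty)$ and $\tilde M := M/M(0)$.

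For \eqref{lem1_eq1}, consider $g_P(s) := s\,\tilde P'(s)/\tilde P(s)$ on the counterclockwise contour $\Gamma_R$ consisting of the imaginary-axis segment from $+jR$ down to $-jR$ followed by the right half-plane semicircle $C_R$ of radius $R$ from $-jR$ back to $+jR$. Inside $\Gamma_R$, $g_P$ has simple poles with residues $\xi_i$ at each $\xi_i\in\mathcal{Z}_P$ and $-\zeta_i$ at each $\zeta_i\in\mathcal{Z}_x$, so the residue theorem gives $\oint_{\Gamma_R}g_P\,ds = 2\pi j\bigl[\sum\xi_i - \sum\zeta_i\bigr]$. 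Parameterizing the imaginary-axis piece by $s=j\omega$ and using $\operatorname{Im}[\tilde P'(j\omega)/\tilde P(j\omega)] = -d\ln|\tilde P(j\omega)|/d\omega$, integration by parts converts the imaginary part of the imaginary-axis integral into $\int_{-\infty}^\infty\ln|P(j\omega)/P(\infty)|\,d\omega$ in the limit $R\to\infty$; the boundary term $2R\ln|\tilde P(jR)|$ vanishes because the real-coefficient conjugate-pair symmetry forces $\ln|\tilde P(jR)| = O(1/R^2)$. On $C_R$, the Laurent expansion $\tilde P(s) = 1 + c_1/[s\,P(\infty)] + O(1/s^2)$ with $c_1 = \lim_{s\to\infty}s[P(s)-P(\infty)]$ yields $g_P(s) = -c_1/[s\,P(\infty)] + O(1/s^2)$ and hence $\int_{C_R}g_P\,ds \to -j\pi c_1/P(\infty)$. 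Equating imaginary parts and dividing by $2\pi$ produces \eqref{lem1_eq1}.

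For \eqref{lem1_eq2}, apply the same strategy to $g_M(s) := \tilde M'(s)/[s\,\tilde M(s)]$ on $\Gamma_R$, additionally indented into $\mathbb{C}^+$ around $s=0$ by a small semicircle of radius $\epsilon$ traversed clockwise. Inside the indented contour, $g_M$ has residues $1/\delta_i$ at each $\delta_i\in\mathcal{Z}_M$ and $-1/\zeta_i$ at each $\zeta_i\in\mathcal{Z}_x$, so the residue theorem gives $\oint g_M\,ds = 2\pi j\bigl[\sum 1/\delta_i - \sum 1/\zeta_i\bigr]$. The $\epsilon$-indent contributes $-j\pi M'(0)/M(0)$ as the half-residue of the simple pole of $g_M$ at $s=0$. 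The $C_R$ contribution vanishes since $g_M=O(1/s^2)$ at infinity for any proper $M$. On the imaginary-axis pieces, the same integration-by-parts scheme gives $\operatorname{Im}\int g_M\,ds \to \int_{-\infty}^\infty\ln|M(j\omega)/M(0)|\,d\omega/\omega^2$, with boundary terms at $\pm R$ and $\pm\epsilon$ vanishing by the asymptotic decay of $\ln|\tilde M(jR)|$ and by $\ln|\tilde M(j\omega)|=O(\omega^2)$ near $\omega=0$ (both consequences of real-coefficient symmetry). Equating imaginary parts delivers \eqref{lem1_eq2} after division by $2\pi$.

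The main obstacle is establishing the integration-by-parts identities rigorously---in particular, controlling the boundary terms $R\ln|\tilde P(jR)|$ and $\ln|\tilde M(j\omega)|/\omega$ in their respective limits. These cancellations rely on the conjugate-pair symmetry of the zeros and poles in \eqref{TF_Gx}--\eqref{TF_F} under the real-coefficient assumption, which forces the relevant linear asymptotic terms to vanish.
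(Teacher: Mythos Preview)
The paper does not prove Lemma~1 at all: it is quoted as a known result from \cite{Seron_2012}, introduced with the sentence ``By resorting to Cauchy's residue theorem, researchers have derived the following Bode-type integral constraints\ldots''. There is therefore no in-paper proof to compare against; the paper's own contribution is precisely to \emph{replace} this residue-theorem derivation by the elementary ``simplified'' method of Lemmas~2--3, leading to Theorems~1--4.

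That said, your contour-integration sketch is a legitimate realization of the Cauchy/residue methodology the paper attributes to the original source, and the computations check out: the residues of $s\tilde P'/\tilde P$ and $\tilde M'/(s\tilde M)$ give exactly the sums $\sum\xi_i-\sum\zeta_i$ and $\sum\delta_i^{-1}-\sum\zeta_i^{-1}$, the semicircle contributions recover the limit terms, and the boundary terms in the integration by parts vanish for the reason you identify (the $O(1/s)$ correction in $\tilde P$ and the $O(s)$ correction in $\tilde M$ are purely imaginary on the $j\omega$-axis because the coefficients are real). The more standard textbook route---e.g.\ in Seron's monograph---integrates $\ln\tilde P(s)$ and $\ln\tilde M(s)$ directly after first factoring out Blaschke products for the NMP zeros and poles, so that the remaining logarithm is analytic on $\overline{\mathbb C^+}$ and Cauchy's theorem applies; your weighted-logarithmic-derivative variant is equivalent and, as you note, avoids branch-cut bookkeeping at the cost of the integration-by-parts step. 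Either way, this is background material here: the paper invokes Lemma~1 only as a foil to motivate the sharper, more explicit formulas obtained via the simplified approach.
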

\begin{remark}
	\hyperref[lem1]{Lemma~1} shows that sensitivity characteristics or trade-off integrals of continuous-time filtering systems are mainly impacted by the NMP zero sets $\mathcal{Z}_P$, $\mathcal{Z}_M$ and $\mathcal{Z}_x$. However, in practice, the trade-off integrals of many filtering systems with finite and non-zero NMP zero sets are unbounded, which can hardly be explained by \hyperref[lem1]{Lemma~1} and thus hinders our understanding and applications of the trade-off integrals. By using the simplified method, we can derive more explicit results and disclose the hidden factors that determine the values and boundedness of $P(j\infty)$, $M(0)$, limits, and hence the trade-off integrals in \eqref{lem1_eq1} and \eqref{lem1_eq2}.
\end{remark}
\noindent The following lemma, whose proof can be found in \cite{Wu_TAC_1992}, is the theoretical foundation of the simplified analysis in continuous-time systems. 
\begin{lemma}\label{lem2}
	For complex numbers $\alpha$ and $\beta$, we have
	\begin{equation}
		\int_{-\infty}^{\infty} \ln\left| \frac{j \omega - \alpha}{j \omega - \beta}  \right|^2 d\omega = 2\pi \cdot \left( |{\rm Re} \ \alpha|- |{\rm Re} \ \beta|   \right).
	\end{equation}
\end{lemma}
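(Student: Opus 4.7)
The plan is to reduce the claim to an elementary one-variable calculus computation, bypassing the contour-integration machinery typically invoked for such trade-off identities. First I would write $\alpha = a + i a'$ and $\beta = b + i b'$ with $a, a', b, b' \in \mathbb{R}$, so that $|j\omega - \alpha|^2 = a^2 + (\omega - a')^2$ and likewise for $\beta$. The integrand becomes the difference
\[
\ln\bigl(a^2 + (\omega - a')^2\bigr) - \ln\bigl(b^2 + (\omega - b')^2\bigr),
\]
which decays like $O(1/\omega^2)$ as $|\omega| \to \infty$, so the improper integral converges even though neither logarithm is individually integrable on $\mathbb{R}$.

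Accordingly, I would work with symmetric truncations $\int_{-T}^{T}$ of each piece and take $T \to \infty$ in the difference. A direct differentiation check verifies that
\[
F(\omega; c) = \omega \ln(c^2 + \omega^2) - 2\omega + 2|c|\arctan\bigl(\omega/|c|\bigr)
\]
is an antiderivative of $\ln(c^2 + \omega^2)$. After translating $\omega \mapsto \omega - a'$, the first truncated integral equals $F(T - a'; a) - F(-T - a'; a)$, and the analogous expression holds for the $\beta$ piece with $b, b'$ in place of $a, a'$.

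The last step is to expand these boundary values as $T \to \infty$. Using $\ln\bigl(a^2 + (T \pm a')^2\bigr) = 2\ln T \pm 2a'/T + O(1/T^2)$ together with $\arctan(\pm T/|a|) \to \pm \pi/2$, a short calculation shows that $F(T - a'; a) - F(-T - a'; a) = 4T\ln T - 4T + 2\pi|a| + o(1)$. The crucial observation is that the divergent contributions $4T\ln T - 4T$ are independent of both $a$ and $a'$, so they cancel between the $\alpha$ and $\beta$ halves upon subtraction, leaving the finite limit $2\pi(|a| - |b|) = 2\pi(|{\rm Re}\,\alpha| - |{\rm Re}\,\beta|)$, which is exactly the asserted identity.

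I expect the main obstacle to be precisely this asymptotic bookkeeping: one must expand $(T \pm a')\ln\bigl(a^2 + (T \pm a')^2\bigr)$ at least one order beyond the leading $2T\ln T$ term, because each endpoint individually generates a spurious $\mp 2a'\ln T$ contribution that only disappears after summing the two endpoint values. Confirming that this $\ln T$ coefficient cancels against its partner, and separately that the two $\arctan$ contributions combine to $2|a|\pi$ and $2|b|\pi$, is the only non-routine part; once that matching is secured, no complex-analytic tools are required at any stage.
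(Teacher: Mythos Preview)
Your approach is sound and in keeping with the paper's philosophy: the paper does not prove this lemma at all but simply cites \cite{Wu_TAC_1992}, whose ``simplified method'' is precisely the elementary real-variable route you outline. So there is nothing to compare against here beyond noting that your antiderivative computation and endpoint asymptotics are correct and yield $4T\ln T - 4T + 2\pi|a| + o(1)$ for each symmetric truncation, with the divergent part independent of $\alpha$ and $\beta$.

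One small correction worth making explicit: the integrand does \emph{not} in general decay like $O(1/\omega^{2})$. Writing the difference as $\ln\bigl(1 + [-2(a'-b')\omega + \text{const}]/(\omega^{2}+\cdots)\bigr)$ shows the leading behaviour is $-2(a'-b')/\omega$, which is only $O(1/\omega)$ when ${\rm Im}\,\alpha \neq {\rm Im}\,\beta$. Hence the improper integral converges only in the symmetric (principal-value) sense, not absolutely. This does not damage your argument, because you immediately pass to symmetric truncations $\int_{-T}^{T}$ and compute the limit directly; indeed, the $\pm 2a'\ln T$ endpoint contributions whose cancellation you flag are exactly the manifestation of this $1/\omega$ tail. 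Just replace the sentence about $O(1/\omega^{2})$ decay with the observation that the leading $1/\omega$ term is odd, so the symmetric limit exists, and the rest goes through unchanged.
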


\subsection{Discrete-Time Filtering Systems}
Most of the preceding notations and interpretations for continuous-time filtering systems can be readily carried over to the discrete-time systems. In discrete-time scenario, we realize the plant $G_x$ in \hyperref[fig1]{Fig.~1} by the discrete-time transfer function $G_x(z) = K_x \prod_{i=1}^{m_x}(z - z_i) / \prod_{i=1}^{n_x}(z - p_i)$, where $K_x \in \mathbb{R}$, and $m_x \leq n_x$. Let $\bar{z}_i$ and $\bar{p}_i$ respectively denote the NMP zeros and unstable poles lying outside the unit disk, i.e., $|\bar{z}_i| > 1$ and $|\bar{p}_i| > 1$. We can then rewrite $G_x(z)$ as
\begin{equation}\label{DTF_Gx}
	G_x(z) = \frac{ K_x \cdot \prod_{i=1}^{\bar{m}_x}(z - \bar{z}_i) \prod_{i=1}^{m_x - \bar{m}_x}(z - z_i)}{\prod_{i=1}^{\bar{n}_x} (z - \bar{p}_i) \prod_{i=1}^{n_x - \bar{n}_x}(z - p_i)},
\end{equation}
where $\bar{m}_x$ and $\bar{n}_x$ are the amounts of NMP zeros and unstable poles, respectively; $\prod_{i=1}^{m_x}(z - z_i) = \prod_{i=1}^{\bar{m}_x}(z - \bar{z}_i)    \prod_{i=1}^{m_x - \bar{m}_x}(z  -  z_i)$, and $\prod_{i=1}^{n_x}(z - p_i) = \prod_{i=1}^{\bar{n}_x} (z - \bar{p}_i)   \prod_{i=1}^{n_x - \bar{n}_x}(z - p_i)$. By the stabilizability and detectability of \eqref{state_space} and the same argument for \eqref{TF_Gy}, the transfer function of plant $G_y(z) = K_y \prod_{a=1}^{m_y}(z - z_a) / \prod_{a=1}^{n_y}(z - p_a)$ with $K_y\in \mathbb{R}$ and $m_y \leq n_y$ can be reformulated as
\begin{equation}\label{DTF_Gy}
	G_y(z) = \frac{K_y \cdot \prod_{a=1}^{m_y}(z - z_a)}{\prod_{i=1}^{\bar{n}_x}(z - \bar{p}_i) \prod_{a=1}^{\bar{n}_y}(z - \bar{p}_a)  \prod_{a=1}^{n_y - \bar{n}_x - \bar{n}_y}(z - p_a)  }
\end{equation}
where $\{\bar{p}^{}_a\}_{a=1}^{\bar{n}_y}$ denote the unstable poles that exclusively exist in $G_y(z)$ and do not appear in $G_x(z)$, and $\prod_{a=1}^{n_y}(z - p_a) = \prod_{i=1}^{\bar{n}_x}(z - \bar{p}_i) \prod_{a=1}^{\bar{n}_y}(z - \bar{p}_a)  \prod_{a=1}^{n_y - \bar{n}_x - \bar{n}_y}(z - p_a)$. We realize the filter $F$ in \hyperref[fig1]{Fig.~1} with the discrete-time transfer function $F(z) = K_f \prod_{b=1}^{m_f}(z - z^{}_b) / \prod_{b=1}^{n_f}(z - p^{}_b)$, where $K_f \in \mathbb{R}$, and $m_f \leq n_f$. By \hyperref[ass4]{A4)} and the same reasoning for \eqref{TF_F}, we can reformulate $F(z)$ as
\begin{equation}\label{DTF_F}
	F(z) = \frac{K_f \cdot \prod_{a=1}^{\bar{n}_y}(z - \bar{p}^{}_a) \prod_{b=1}^{m_f - \bar{n}_y}(z- z^{}_b) }{\prod_{b=1}^{n_f}(z - p^{}_b)},
\end{equation}
where $\{p^{}_b\}_{b=1}^{n_f}$ are all stable poles inside the unit disk by \hyperref[ass2]{A2)}, and $\prod_{b=1}^{m_f}(z - z_b) = \prod_{a=1}^{\bar{n}_y}(z - \bar{p}_a) \prod_{b=1}^{m_f - \bar{n}_y}(z- z_b)$. Substituting \eqref{DTF_Gy} and \eqref{DTF_F} into the product $F(z)G_y(z)$ and canceling the common factors, we have
\begin{equation}\label{DTF_FG}
	F(z) G_y(z) = \frac{K \cdot \prod_{j=1}^{m}(z - z_j)}{ \prod_{i=1}^{\bar{n}_x}(z-\bar{p}_i) \prod_{j=1}^{n - \bar{n}_x}(z - p_j) },
\end{equation}
where the leading coefficient $K = K_f K_y$; $\prod_{j=1}^{n - \bar{n}_x}( z - p_j ) =  \prod_{a=1}^{n_y - \bar{n}_x - \bar{n}_y}(z - p_a) \prod_{b=1}^{n_f}(z-p^{}_b)$, and $\prod_{j=1}^{m}(z - z_j) = \prod_{j=1}^{\bar{m}}(z - \bar{z}_j)\prod_{j=1}^{m - \bar{m}}(z - z_j) = \prod_{a=1}^{m_y}(z - z_a) \prod_{b=1}^{m_f - \bar{n}_y} (z - z_b)$ with $\{\bar{z}_j\}_{j=1}^{\bar{m}}$ being the NMP zeros of $F(z)G_y(z)$. There is no existing result on the sensitivity trade-off integrals of discrete-time filtering systems. The following lemma, whose proof is available in \cite{Wu_TAC_1992}, gives the theoretical foundation of the simplified method in discrete-time systems. 
\begin{lemma}\label{lem3}
	For a complex number $\alpha$, we have
	\begin{equation*}
		\int_{-\pi}^{\pi} \log\left| e^{j\omega} - \alpha \right|^2 d\omega = 
		\begin{cases}
			2\pi \cdot \log |\alpha|^2, \qquad &{\rm if} \ |\alpha| > 1;\\
			0, \qquad &{\rm if} \ |\alpha| \leq 1;
		\end{cases}
	\end{equation*}
	where $|\alpha|$ is the modulus of $\alpha$.
\end{lemma}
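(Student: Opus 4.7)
The plan is to reduce the integral to standard Fourier integrals by expanding $\log|1-w|^2$ as a Taylor series, splitting into the three cases $|\alpha|<1$, $|\alpha|>1$, and $|\alpha|=1$. The starting observation is that $|e^{j\omega}-\alpha|^2 = (e^{j\omega}-\alpha)(e^{-j\omega}-\bar\alpha)$, and that $|e^{j\omega}|=1$ lets us pull factors of $e^{\pm j\omega}$ in and out of the modulus as convenient.

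First I would treat the case $|\alpha|<1$. Using $|e^{j\omega}|=1$, write $|e^{j\omega}-\alpha|^2 = |1-\alpha e^{-j\omega}|^2$, and expand
\begin{equation*}
\log|1-\alpha e^{-j\omega}|^2 \;=\; \log(1-\alpha e^{-j\omega}) + \log(1-\bar\alpha e^{j\omega}) \;=\; -\sum_{k=1}^{\infty}\frac{\alpha^k e^{-jk\omega}+\bar\alpha^k e^{jk\omega}}{k}.
\end{equation*}
Since $|\alpha|<1$, both series converge uniformly in $\omega$, so termwise integration is legal; and for every $k\geq 1$ one has $\int_{-\pi}^{\pi} e^{\pm jk\omega}d\omega = 0$. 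Thus the integral vanishes, matching the stated right-hand side.

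Next I would handle $|\alpha|>1$ by the trick of factoring out the dominant term: $|e^{j\omega}-\alpha|^2 = |\alpha|^2\,|1 - \alpha^{-1}e^{j\omega}|^2$, so
\begin{equation*}
\int_{-\pi}^{\pi}\log|e^{j\omega}-\alpha|^2\,d\omega \;=\; 2\pi\log|\alpha|^2 + \int_{-\pi}^{\pi}\log|1-\alpha^{-1}e^{j\omega}|^2\,d\omega,
\end{equation*}
and the remaining integral is zero by the previous step applied to $\alpha^{-1}$ (which now satisfies $|\alpha^{-1}|<1$). This produces the $2\pi\log|\alpha|^2$ branch.

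Finally, the boundary case $|\alpha|=1$ is the only delicate part and is where I expect the main obstacle, because the integrand has a logarithmic singularity at $\omega=\arg\alpha$ and the uniform convergence argument above breaks down. Writing $\alpha = e^{j\theta}$, one gets $|e^{j\omega}-e^{j\theta}|^2 = 4\sin^2((\omega-\theta)/2)$, and by a translation of the variable the integral becomes $2\int_{-\pi}^{\pi}\log\bigl|2\sin(\phi/2)\bigr|\,d\phi$. I would justify this as an improper but absolutely convergent integral and then reduce it, via the substitution $u=\phi/2$ and the classical identity $\int_{0}^{\pi/2}\log\sin u\,du = -\tfrac{\pi}{2}\log 2$, to verify that its value is $0$, consistent with the $|\alpha|\leq 1$ branch of the stated formula and thereby also checking continuity of the result as $|\alpha|\uparrow 1$ and $|\alpha|\downarrow 1$.
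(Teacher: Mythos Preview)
Your proof is correct. The paper itself does not prove this lemma; it simply cites Wu and Jonckheere (1992) for the argument, so there is no in-paper proof to compare against. Your Fourier-series expansion for $|\alpha|<1$, the factoring trick for $|\alpha|>1$, and the reduction to the classical integral $\int_{0}^{\pi/2}\log\sin u\,du = -\tfrac{\pi}{2}\log 2$ for the boundary case $|\alpha|=1$ together give a clean self-contained argument; in complex-analysis terms this is the standard elementary derivation of a special case of Jensen's formula (equivalently, the mean-value property of the harmonic function $\log|z-\alpha|$ on the unit disk).
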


%===============================================
%=== Section 3 Continuous-Time Filtering Trade-offs===
%===============================================
\section{Continuous-Time Filtering Trade-offs}\label{sec3}
	In this section, a simplified analysis is applied to the Bode-type integrals of sensitivity functions $P(s)$ and $M(s)$ in \eqref{pm_funs}, respectively, which are also referred to as the continuous-time filtering $P$- and $M$-integrals in the subsequent context. 

\subsection{Continuous-Time Filtering $P$-Integral}
The simplified analysis of continuous-time filtering $P$-integral gives the following result. 
\begin{theorem}\label{thm1}
	For the continuous-time LTI filtering systems subject to plant models \eqref{TF_Gx}, \eqref{TF_Gy} and filter \eqref{TF_F}, as \hyperref[fig1]{Fig.~1} shows, the Bode-type integral of the filtering sensitivity function $P(s)$ defined in \eqref{p_fun} satisfies
	\begin{align}\label{thm1_eq0}
		\frac{1}{2\pi}\int_{-\infty}^{\infty} \ln|P(s)| d\omega    
		 = \begin{cases}
			\sum_{k=1}^{\bar{m}_k}  \bar{r}_k + \sum_{i=1}^{\bar{n}_x}  \bar{p}_i - \sum_{i=1}^{\bar{m}_x}  \bar{z}_i, 	\hspace{68pt} {\rm if \ }m_x+n >  n_x + m + 1;\\
			\sum_{k=1}^{\bar{m}_k} \bar{r}_k + \sum_{i=1}^{\bar{n}_x} \bar{p}^{}_i - \sum_{i=1}^{\bar{m}_x}\bar{z}_i  - \frac{K}{2K_x},		\hspace{36pt} {\rm if \ }m_x+n = n_x + m + 1;\\
			\sum_{k=1}^{\bar{m}_k} \bar{r}_k + \sum_{i=1}^{m_x - \bar{m}_x} z_i + \sum_{i=1}^{\bar{n}_x} \bar{p}_i - \sum_{i=1}^{n_x - \bar{n}_x}p_i - \sum_{j=1}^{m}z_j + \sum_{j=1}^{n - \bar{n}_x}p_j,\\
			\hspace{165pt} {\rm if \ }m_x + n = n_x + m  {\rm \ and \ } K = 2K_x; \\
			\pm \infty, \hspace{267pt} {\rm otherwise;}
		\end{cases}
	\end{align}
	where $\{\bar{r}_k\}_{k=1}^{\bar{m}_k}$ are the NMP zeros of $G_x(s)-F(s)G_y(s)$ excluding $\{\bar{p}_i\}_{i=1}^{\bar{n}_x}$.
\end{theorem}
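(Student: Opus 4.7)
The plan is to bring $P(s)$ into an explicit pole--zero form, apply \hyperref[lem2]{Lemma~2} one pair at a time, and then use Vieta's formulas to eliminate a sum of ``hidden'' roots that is not visible in the given data. First I would combine \eqref{TF_Gx} and \eqref{TF_FG} to write $G_x(s)-F(s)G_y(s)=R(s)/[D_x'(s)D_{fg}'(s)]$, where $D_x'(s)=\prod_{i=1}^{n_x-\bar n_x}(s-p_i)$, $D_{fg}'(s)=\prod_{j=1}^{n-\bar n_x}(s-p_j)$, and $R(s)$ is defined by $R(s)\prod_{i=1}^{\bar n_x}(s-\bar p_i)=N_x(s)D_{fg}'(s)-N_{fg}(s)D_x'(s)$; the BEE assumption~\hyperref[ass4]{A4)} is exactly what guarantees that $\prod_i(s-\bar p_i)$ divides the right-hand side. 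Substituting into \eqref{p_fun} yields
\begin{equation*}
P(s)=\frac{R(s)\prod_{i=1}^{\bar n_x}(s-\bar p_i)}{K_x\prod_{i=1}^{\bar m_x}(s-\bar z_i)\prod_{i=1}^{m_x-\bar m_x}(s-z_i)\prod_{j=1}^{n-\bar n_x}(s-p_j)},
\end{equation*}
so the zeros of $P$ split as $\{\bar r_k\}\cup\{r_\ell\}\cup\{\bar p_i\}$ with $\bar r_k$, $r_\ell$ the NMP and MP roots of $R$, while the poles split as $\{\bar z_i\}\cup\{z_i\}\cup\{p_j\}$; a quick degree count shows the two sets are equinumerous, so $P$ is biproper whenever it is proper.

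Second, I would identify the leading coefficient $L_P=P(\infty)$ and the convergence condition. Comparing the two candidate top terms $K_x s^{m_x+n-\bar n_x}$ (from $N_xD_{fg}'$) and $Ks^{m+n_x-\bar n_x}$ (from $N_{fg}D_x'$) gives $L_P=1$ in cases~(i) and~(ii), and $L_P=1-K/K_x$ in case~(iii). Because $|P(j\omega)|$ is even in $\omega$, the tail $\ln|P(j\omega)|\to\ln|L_P|$ decays like $O(\omega^{-2})$ only when $|L_P|=1$, which selects precisely cases~(i), (ii), and case~(iii) under the additional constraint $K=2K_x$ (the alternative $K=0$ is ruled out by $K=K_fK_y\neq 0$). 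In every other configuration a nonzero constant tail $\ln|L_P|\neq 0$ forces the integral to $\pm\infty$, producing the ``otherwise'' branch. In the three convergent regimes, pairing each zero of $P$ with a pole and applying \hyperref[lem2]{Lemma~2} termwise gives
\begin{equation*}
\frac{1}{2\pi}\int_{-\infty}^{\infty}\ln|P(j\omega)|\,d\omega=\frac{1}{2}\Big[\textstyle\sum\bar r_k+\sum\bar p_i-\sum r_\ell-\sum\bar z_i+\sum z_i+\sum p_j\Big].
\end{equation*}

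The main obstacle is that $\sum r_\ell$ is not directly expressible in the data, and must be eliminated via the Vieta relation $\sum\bar r_k+\sum r_\ell=-c_1/K_R$ applied to $R$, where $K_R$ and $c_1$ are the two leading coefficients of $R$ read off by expanding $N_xD_{fg}'-N_{fg}D_x'$ to two orders. In case~(i) the polynomial $N_{fg}D_x'$ is too low in degree to affect either top coefficient, giving a short combination that absorbs $\sum r_\ell$; the outer factor $1/2$ then doubles the remaining NMP terms and reproduces $\sum\bar r_k+\sum\bar p_i-\sum\bar z_i$. In case~(ii) the leading term of $N_{fg}D_x'$ reaches the subdominant slot and contributes an extra $+K/K_x$, which becomes the correction $-K/(2K_x)$ after halving. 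In case~(iii) both polynomials contribute to the top two coefficients, and the Vieta denominator $K_R=K_x-K$ equals $-K_x$ precisely because $K=2K_x$, the same condition that ensured $|L_P|=1$; substituting the resulting long combination yields the six-sum formula. The subtlest point is this coincidence in case~(iii): convergence and the non-degeneracy of Vieta's relation both hinge on $K=2K_x$, and the bookkeeping only closes cleanly because one algebraic condition discharges both tasks; the ``otherwise'' branch then follows immediately from the tail analysis in the previous paragraph.
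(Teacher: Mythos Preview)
Your proposal is correct and follows essentially the same route as the paper's own proof: write $P(s)$ in explicit pole--zero form, invoke \hyperref[lem2]{Lemma~2} after checking that zeros and poles are equinumerous, and then eliminate the ``hidden'' MP roots $\{r_\ell\}$ via a Vieta/coefficient-matching identity, splitting into the three relative-degree cases. The paper's $\Gamma_c(s)$ is your $R(s)\prod_i(s-\bar p_i)$, its equation~\eqref{thm1_eq5} (and the analogues \eqref{thm1_eq9}, \eqref{thm1_eq12}) is exactly your Vieta relation, and its divergent term $\int\ln|(K_x-K)/K_x|\,d\omega$ is your tail argument with $L_P=1-K/K_x$; so the two arguments differ only in packaging.
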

\begin{proof}[\textbf{Proof}]
	Given the filtering sensitivity function $P(s) = [G_x(s) - F(s) G_y(s)]G_x^{-1}(s)$ in \eqref{p_fun}, the continuous-time filtering $P$-integral satisfies
	\begin{equation}\label{thm1_eq1}
		\int_{-\infty}^\infty \ln|P(s)|d\omega  = \int_{-\infty}^\infty \left(  \ln|G_x - FG_y| + \ln |G_x^{-1}|  \right)  d\omega.
	\end{equation}
	Substituting (\ref{TF_Gx}-\ref{TF_FG}) into \eqref{thm1_eq1}, we have $G_x(s) - F(s)G_y(s)    = \Gamma_c(s) / \Lambda_c(s)$, where 
	\begin{subequations}
		\begin{align}
			\Gamma_c(s)  & = K_x  \prod_{i=1}^{m_x}(s-z_i) \prod_{j=1}^{n - \bar{n}_x}(s - p_j^{})   - K   \prod_{i=1}^{n_x - \bar{n}_x}(s-p^{}_{i})  \prod_{j=1}^{m}(s - z_j) ,  \label{thm1_eq2}  \\
			\Lambda_c(s) & =  \prod_{i=1}^{{\bar{n}}_{x}}(s - \bar{p}_{i}^{})  \prod_{i=1}^{n_x - \bar{n}_x}(s-p^{}_{i})  \prod_{j=1}^{n - \bar{n}_x}(s - p_j^{}). \label{thm1_eq3}
		\end{align}		
	\end{subequations}
	Since $F(s)G_y(s)G_x^{-1}(s)$ is proper by \hyperref[ass3]{A3)}, we have $m_x + n \geq n_x + m$, which induces the following three different scenarios in computing the continuous-time $P$-integral. 
	
	\vspace{0.5em}
	
	\noindent {\underline{Case 1}}: $m_x + n > n_x + m + 1$, which is the most general scenario. After we combine and factorize the polynomials in \eqref{thm1_eq2}, $\Gamma_c(s)$ can be reformulated as
	\begin{equation}\label{thm1_eq4}
		\Gamma_c(s) = K_x \prod_{i=1}^{\bar{n}_x}(s-\bar{p}_i^{}) \prod_{k=1}^{m_x +n-2\bar{n}_x}(s-r_k)
		%		= K_x \prod_{k=1}^{m_x + n-\bar{n}_x}(s-r_k) ,
	\end{equation}
	where the unstable poles $\{ \bar{p}_i \}_{i=1}^{\bar{n}_x}$, shared by $G_x$ and $G_y$, must be canceled by the NMP zeros of $G_x - FG_y$ and thus appear in $\Gamma_c(s)$ by \hyperref[ass4]{A4)} and \hyperref[rem1]{Remark~1}, and $\{r_k\}_{k=1}^{m_x+n - 2\bar{n}_x}$ then denote the remaining zeros of $G_x - FG_y$ excluding $\{ \bar{p}_i \}_{i=1}^{\bar{n}_x}$. Equating the coefficients of $s^{m_x + n - \bar{n}_x - 1}$ in \eqref{thm1_eq2} and \eqref{thm1_eq4} gives
	\begin{equation}\label{thm1_eq5}
		\sum_{i=1}^{m_x}z_i + \sum_{j=1}^{n-\bar{n}_x} p^{}_j = \sum_{i=1}^{\bar{n}_x} \bar{p}_i + \sum_{k=1}^{m_x +n-2\bar{n}_x}r_k.
	\end{equation}
	Plugging \eqref{TF_Gx}, \eqref{thm1_eq3} and \eqref{thm1_eq4} into the sensitivity function $P(s) = [G_x(s) - F(s)G_y(s)]G_x^{-1}(s)$, we have
	\begin{equation}\label{thm1_eq6}
		P(s) = \frac{ \prod_{i=1}^{\bar{n}_x}(s-\bar{p}_i^{}) \prod_{k=1}^{m_x+n-2\bar{n}_x}(s-r_k) }{\prod_{i=1}^{m_x}(s-z_i)\prod_{j=1}^{n-\bar{n}_x}(s-p_j^{})} .
	\end{equation}
	Integrating the logarithm of $|P(s)|$ over the frequency range $\omega \in (-\infty, \infty)$ and applying \hyperref[lem2]{Lemma~2} to the integral yield
	\begin{align}\label{thm1_eq7}
		\hspace{-5pt}	\frac{1}{2\pi}\int_{-\infty}^{\infty}\ln |P(s)| d\omega =& \frac{1}{2} \cdot \Bigg[ \sum_{i=1}^{\bar{n}_x}|{\rm Re} \ \bar{p}^{}_i| + \sum_{k=1}^{m_x+n-2\bar{n}_x}|{\rm Re} \ r_k| - \sum_{i=1}^{m_x}|{\rm Re} \ z_i| - \sum_{j=1}^{n - \bar{n}_x}|{\rm Re} \ p^{}_j|  \Bigg].
	\end{align}
	We then look into the sums on the RHS of \eqref{thm1_eq7}. Since the real parts of $\{\bar{p}_i\}_{i=1}^{\bar{n}_x}$ are positive in \eqref{TF_Gx}, while the real parts of $\{p_j\}_{j=1}^{n-\bar{n}_x}$ are negative in (\ref{TF_Gy}-\ref{TF_FG}), the first and fourth sums in \eqref{thm1_eq7} satisfy $\sum_{i=1}^{\bar{n}_x}|{\rm Re} \ \bar{p}^{}_i| = \sum_{i=1}^{\bar{n}_x} {\rm Re} \ \bar{p}^{}_i$, and $\sum_{j=1}^{n - \bar{n}_x}|{\rm Re} \ p^{}_j| = -\sum_{j=1}^{n - \bar{n}_x}{\rm Re} \ p^{}_j$. To simplify the second sum in \eqref{thm1_eq7}, we decompose the sum $\sum_{k=1}^{m_x +n-2\bar{n}_x}r^{}_k = \sum_{k=1}^{\bar{m}_k}\bar{r}_k + \sum_{k=1}^{m_x +n-2\bar{n}_x - \bar{m}_k}r^{}_k $, where $\{\bar{r}_k\}_{k=1}^{\bar{m}_k}$ denotes the NMP zeros of $G_x-FG_y$ excluding $\{\bar{p}_i\}_{i=1}^{\bar{n}_x}$; substituting this decomposition into \eqref{thm1_eq5}, we obtain the sum of MP zeros in $G_x - FG_y$, $\sum_{k=1}^{m_x +n-2\bar{n}_x - \bar{m}_k}r^{}_k =  \sum_{i=1}^{\bar{m}_x}\bar{z}^{}_i + \sum_{i=1}^{m_x - \bar{m}_x}z_i^{} + \sum_{j=1}^{n-\bar{n}_x} p^{}_j - \sum_{i=1}^{\bar{n}_x} \bar{p}^{}_i - \sum_{k=1}^{\bar{m}_k} \bar{r}_k$. Hence, the second sum in \eqref{thm1_eq7} satisfies $\sum_{k=1}^{m_x+ n -2\bar{n}_x}|{\rm Re} \ r_k| = \sum_{k=1}^{\bar{m}_k} {\rm Re \ } \bar{r}_k^{} - \sum_{k=1}^{m_x+ n -2\bar{n}_x - \bar{m}_k} {\rm Re} \ r_k= 2 \cdot \sum_{k=1}^{\bar{m}_k} {\rm Re} \  \bar{r}_k^{} - \sum_{i=1}^{\bar{m}_x}  {\rm Re \ } \bar{z}_i - \sum_{i=1}^{m_x - \bar{m}_x}{\rm Re \ }z_i - \sum_{j=1}^{n - \bar{n}_x} {\rm Re \ } p_j^{} + \sum_{i=1}^{\bar{n}_x} {\rm Re \ }\bar{p}_i^{} $. Since the real parts of $\{\bar{z}_i\}_{i=1}^{\bar{m}_x}$ are positive, while the real parts of $\{z_i\}_{i=1}^{m_x - \bar{m}_x}$ are negative, the third sum in \eqref{thm1_eq7} can be simplified as $\sum_{i=1}^{m_x}|{\rm Re} \ z_i| = \sum_{i=1}^{\bar{m}_x}{\rm Re} \ \bar{z}_i - \sum_{i=1}^{m_x - \bar{m}_x}{\rm Re} \ z_i$. Substituting the preceding simplifications back into \eqref{thm1_eq7}, when $m_x + n > n_x + m + 1$, the continuous-time filtering $P$-integral satisfies
	\begin{equation}\label{thm1_eq8}
		\frac{1}{2\pi}\int_{-\infty}^{\infty}\ln |P(s)| d\omega = \sum_{k=1}^{\bar{m}_k}{\rm Re} \ \bar{r}_k + \sum_{i=1}^{\bar{n}_x} {\rm Re} \ \bar{p}_i - \sum_{i=1}^{\bar{m}_x}{\rm Re} \ \bar{z}_i.
	\end{equation}

	\noindent \underline{Case 2}: $m_x + n = n_x + m + 1$. In this case, after combining and factorizing the polynomials in \eqref{thm1_eq2}, we obtain the same expression for $\Gamma_c(s)$ as in \eqref{thm1_eq4}. Equating the coefficients of $s^{m_x + n - \bar{n}_x - 1}$ or $s^{n_x + m - \bar{n}_x}$ in \eqref{thm1_eq2} and \eqref{thm1_eq4} gives
	\begin{equation}\label{thm1_eq9}
		\sum_{i=1}^{m_x} z_i + \sum_{j=1}^{n - \bar{n}_x}p_j + \frac{K}{K_x} =  \sum_{i=1}^{\bar{n}_x} \bar{p}_i + \sum_{k=1}^{m_x + n - 2\bar{n}_x}r_k .
	\end{equation}
	The expressions of filtering the sensitivity function $P(s)$ and filtering $P$-integral are the same as \eqref{thm1_eq6} and \eqref{thm1_eq7}, and the first, third, and fourth sums on the RHS of \eqref{thm1_eq7} can be expanded and simplified by the same procedures as in Case 1. The only difference between the Cases 1 and 2 lies in the second sum in \eqref{thm1_eq7}. By \eqref{thm1_eq9}, the sum of MP zeros of $G_x - FG_y$ is $\sum_{k=1}^{m_x + n - 2\bar{n}_x - \bar{m}_k} r_k = \sum_{i=1}^{\bar{m}_x} \bar{z}_i + \sum_{i=1}^{m_x - \bar{m}_x} z_i + \sum_{j=1}^{n - \bar{n}_x}p_j + {K}/{K_x} - \sum_{i=1}^{\bar{n}_x} \bar{p}_i -  \sum_{k=1}^{\bar{m}_k} \bar{r}_k$. Hence, when $m_x + n = n_x + m + 1$, the second sum in \eqref{thm1_eq7} satisfies $\sum_{k=1}^{m_x + n - 2\bar{n}_x}|{\rm Re} \ r_k| = \sum_{k=1}^{\bar{m}_k}{\rm Re \ } \bar{r}_k - \sum_{k=1}^{m_x + n - 2\bar{n}_x - \bar{m}_k}{\rm Re \ }r_k =  2 \cdot \sum_{k=1}^{\bar{m}_k}{\rm Re \ }\bar{r}_k - \sum_{i=1}^{\bar{m}_x} {\rm Re \ }\bar{z}_i - \sum_{i=1}^{m_x - \bar{m}_x} {\rm Re \ }z_i - \sum_{j=1}^{n - \bar{n}_x}{\rm Re \ }p_j - {K}/{K_x} + \sum_{i=1}^{\bar{n}_x} \bar{p}_i$. Plugging the preceding simplifications back into \eqref{thm1_eq7}, when $m_x + n = n_x + m + 1$, the continuous-time filtering $P$-integral satisfies
	\begin{align}\label{thm1_eq10}
		\frac{1}{2\pi}\int_{-\infty}^{\infty}\ln |P(s)| d\omega =  \sum_{k=1}^{\bar{m}_k}{\rm Re \ } \bar{r}_k &+ \sum_{i=1}^{\bar{n}_x}{\rm Re \ }\bar{p}_i  - \sum_{i=1}^{\bar{m}_x} {\rm Re \ }\bar{z}_i- \frac{K}{2K_x}. \nonumber
	\end{align}

	\noindent \underline{Case 3}: $m_x + n = n_x + m$. In this scenario, after we combine and factorize the polynomials in \eqref{thm1_eq2}, $\Gamma_c(s)$ can be rewritten as 
	\begin{equation}\label{thm1_eq11}
		\Gamma_c(s)  = (K_x - K) \cdot \prod_{i=1}^{\bar{n}_x}(s - \bar{p}_i)\prod_{k=1}^{m_x + n - 2\bar{n}_x}(s - r_k).
	\end{equation}
	Equating the coefficients of $s^{m_x + n - \bar{n}_x - 1}$ or $s^{n_x + m - \bar{n}_x - 1}$ in \eqref{thm1_eq2} and \eqref{thm1_eq11} gives
	\begin{align}\label{thm1_eq12}
		 K_x \Bigg[ \sum_{i=1}^{m_x}z_i + \sum_{j=1}^{n - \bar{n}_x}p_j  \Bigg]  - K \Bigg[  \sum_{i=1}^{n_x - \bar{n}_x}p_i + \sum_{j=1}^{m}z_j  \Bigg]  =  (K_x - K) \Bigg[  \sum_{i=1}^{\bar{n}_x}\bar{p}_i +  \sum_{k=1}^{m_x + n - 2\bar{n}_x}r_k \Bigg]  
	\end{align}
	Plugging \eqref{TF_Gx}, \eqref{thm1_eq3} and \eqref{thm1_eq11} into \eqref{p_fun}, the filtering sensitivity function $P(s)$ can then be rewritten as
	\begin{equation*}\label{thm1_eq13}
		P(s) = \frac{(K_x - K)  \prod_{i=1}^{\bar{n}_x}(s-\bar{p}_i^{}) \prod_{k=1}^{m_x+n-2\bar{n}_x}(s-r_k)  }{K_x  \prod_{i=1}^{m_x}(s-z_i)  \prod_{j=1}^{n-\bar{n}_x}(s-p_j^{})}.
	\end{equation*}
	Integrating the logarithm of $|P(s)|$ over the entire frequency range and applying \hyperref[lem2]{Lemma~2} to the integral, the continuous-time $P$-integral then satisfies
	\begin{align}\label{thm1_eq14}
		&\frac{1}{2\pi}\int_{-\infty}^{\infty}\ln |P(s)| d\omega = \frac{1}{2} \cdot \Bigg[  \sum_{i=1}^{\bar{n}_x}|{\rm Re} \ \bar{p}^{}_i|+ \sum_{k=1}^{m_x+n-2\bar{n}_x}|{\rm Re} \ r_k|\nonumber \\
		& \hspace{125pt}  - \sum_{i=1}^{m_x}|{\rm Re} \ z_i| - \sum_{j=1}^{n - \bar{n}_x}|{\rm Re} \ p^{}_j| \Bigg]  + \frac{1}{2\pi} \int_{-\infty}^{\infty} \ln \left| \frac{K_x - K}{K_x} \right| d\omega.
	\end{align}
	Equation \eqref{thm1_eq14} shows that when $m_x + n = n_x + m$, the continuous-time $P$-integral is bounded if and only if $\ln |(K_x - K) / K_x| =  0$, which requires that $K = 2K_x$ or $K = 0$; otherwise, the $P$-integral is unbounded, i.e., $\pm\infty$. Since $K = 0$ is a trivial scenario, namely $G_y(s) = 0$ or $F(s) = 0$,  in the following we only discuss the case when $K = 2K_x$. The first, third, and fourth sums on the RHS of \eqref{thm1_eq14} can be expanded and simplified by the same procedures as in the Cases 1 and 2. We then consider the second sum in \eqref{thm1_eq14}. Substituting the decomposition $\sum_{k=1}^{m_x +n-2\bar{n}_x}r^{}_k = \sum_{k=1}^{\bar{m}_k}\bar{r}_k + \sum_{k=1}^{m_x +n-2\bar{n}_x - \bar{m}_k}r^{}_k $ into \eqref{thm1_eq12}, the sum of MP zeros of $G_x - FG_y$ then becomes $\sum_{k=1}^{m_x+n-2\bar{n}_x - \bar{m}_k}r_k = K_x / (K_x - K) \cdot [ \sum_{i=1}^{m_x}z_i + \sum_{j=1}^{n - \bar{n}_x}p_j  ]  - K / (K_x - K) \cdot [\sum_{i=1}^{n_x - \bar{n}_x}p_i + \sum_{j=1}^{m}z_j] - \sum_{i=1}^{\bar{n}_x} \bar{p}_i - \sum_{k=1}^{\bar{m}_k} \bar{r}_k$. Hence, the second sum in \eqref{thm1_eq14} can be expanded as $\sum_{k=1}^{m_x+n-2\bar{n}_x}|{\rm Re} \ r_k|=\sum_{k=1}^{\bar{m}_k}{\rm Re} \ \bar{r}_k - \sum_{k=1}^{m_x+n-2\bar{n}_x - \bar{m}_k}{\rm Re \ }r_k = 2 \cdot [\sum_{k=1}^{\bar{m}_k}{\rm Re \ }\bar{r}_k - \sum_{i=1}^{n_x - \bar{n}_x} {\rm Re \ }p_i - \sum_{j=1}^{m} {\rm Re \ }z_j]  + \sum_{i=1}^{\bar{m}_x} {\rm Re \ } \bar{z}_i + \sum_{i=1}^{m_x - \bar{m}_x}  {\rm Re \ }  z_i   + \sum_{j=1}^{n - \bar{n}_x}{\rm Re \ } p_j + \sum_{i=1}^{\bar{n}_x}{\rm Re \ }\bar{p}_i$. Substituting the preceding simplifications back into \eqref{thm1_eq14}, when $m_x + n = n_x + m$ and $K = 2K_x$, the continuous-time filtering $P$-integral satisfies
	\begin{align}\label{thm1_eq15}
		\hspace{-25pt}& \frac{1}{2\pi}\int_{-\infty}^{\infty}\ln |P(s)| d\omega = \sum_{k=1}^{\bar{m}_k} {\rm Re \ } \bar{r}_k + \sum_{i=1}^{m_x - \bar{m}_x} {\rm Re \ }z_i  + \sum_{i=1}^{\bar{n}_x}{\rm Re \ }\bar{p}_i \\
		& \hspace{163pt}  -\sum_{i=1}^{n_x - \bar{n}_x}{\rm Re \ }p_i - \sum_{j=1}^{m}{\rm Re \ }z_j + \sum_{j=1}^{n - \bar{n}_x}{\rm Re \ }p_j. \nonumber
	\end{align}
	Since complex zeros and poles always appear as conjugate pairs, summarizing \eqref{thm1_eq8}, \eqref{thm1_eq10} and \eqref{thm1_eq15} gives \eqref{thm1_eq0} in \hyperref[thm1]{Theorem~1}. This completes the proof. 	
\end{proof}
\begin{remark}
	Compared with \hyperref[lem1]{Lemma~1}, \hyperref[thm1]{Theorem~1} provides more explicit and detailed information on continuous-time filtering $P$-integral. \hyperref[thm1]{Theorem~1} shows that apart from the NMP zero sets $\mathcal{Z}_P \supset \{\bar{p}_i\}_{i=1}^{\bar{n}_x}$ and $\mathcal{Z}_x \supset \{\bar{z}_i\}_{i=1}^{\bar{m}_x}$ defined in \hyperref[lem1]{Lemma~1}, the value and boundedness of the continuous-time $P$-integral are also determined by the leading coefficients $K$ and $K_x$, MP zeros $\{z_i\}_{i=1}^{m_x - \bar{m}_x}$ and $\{ z_j \}_{j=1}^m$,  stable poles $\{p_i\}_{i=1}^{n_x - \bar{n}_x}$ and $\{p_j\}_{j=1}^{n - \bar{n}_x}$, and relative degrees of transfer functions $G_x$ and $FG_y$. Meanwhile, when $m_x+n = n_x + m$, $\mathcal{Z}_x$ will not impact the value of the $P$-integral. The boundedness conditions given in \hyperref[thm1]{Theorem~1} also help the designers to better understand the scope and limitations of continuous-time $P$-integral.
\end{remark}

%\subsection{Bode Integral for \hmath$M${\fontsize{11}{11}\rm (}\hmath$s${\fontsize{11}{11}\rm )}}
\subsection{Continuous-Time Filtering $M$-Integral}
We then consider the continuous-time $M$-integral. Since $\lim_{s\rightarrow\infty} M(s) = 0$, which can be verified by substituting (\ref{TF_Gx}-\ref{TF_F}) into the sensitivity functions \eqref{pm_funs} and complementary constraint \eqref{comp_constraint}, a weighting function $1/\omega^2$ is introduced to control the boundedness of continuous-time $M$-integral. This weighting function or weighted integral was also adopted in \hyperref[lem1]{Lemma~1} and \cite{Middleton_1991, Wan_TAC_2020}, wherein the authors investigated the complementary sensitivity Bode integral of continuous-time LTI control systems. The following theorem presents the result of continuous-time filtering $M$-integral derived from the simplified analysis. 
\begin{theorem}\label{thm2}
	For the continuous-time LTI filtering systems subject to plant models \eqref{TF_Gx}, \eqref{TF_Gy} and filter \eqref{TF_F}, as \hyperref[fig1]{Fig.~1} shows, the Bode-type integral of the filtering sensitivity function $M(s)$ defined in \eqref{m_fun} satisfies
	\begin{align}\label{thm2_eq0}
		\frac{1}{2\pi}\int_{-\infty}^{\infty} \ln |M(s)| \ \frac{d\omega}{\omega^2}= \begin{cases}
			\frac{1}{2} \cdot \big[\sum_{j=1}^{n - \bar{n}_x}  {p}^{-1}_j -\sum_{i=1}^{n_x - \bar{n}_x} {p}^{-1}_i  + \sum_{j=1}^{m}|{\rm Re} \ z^{-1}_j|   - \sum_{i=1}^{m_x}|{\rm Re} \  z^{-1}_i| \big], \\
			\hspace{82pt} {\rm if } \ |  K  \prod_{i=1}^{n_x - \bar{n}_x}{p}_i   \prod_{j=1}^{m}z_j   | = |  K_x\prod_{i=1}^{m_x}z_i \prod_{j=1}^{n - \bar{n}_x}{p}_j    |;\\
			\pm \infty, \hspace{252pt} {\rm otherwise}. 
		\end{cases}
	\end{align}
\end{theorem}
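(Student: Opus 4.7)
The plan is to mirror the strategy used in the proof of \hyperref[thm1]{Theorem~1}, adapted to the $1/\omega^2$ weight via a change of variables. First, substituting \eqref{TF_Gx} and \eqref{TF_FG} into $M=FG_yG_x^{-1}$ and cancelling the shared $\prod_{i=1}^{\bar n_x}(s-\bar p_i)$ factor yields
\[
M(s)\;=\;\frac{K}{K_x}\cdot\frac{\prod_{j=1}^{m}(s-z_j)\,\prod_{i=1}^{n_x-\bar n_x}(s-p_i)}{\prod_{i=1}^{m_x}(s-z_i)\,\prod_{j=1}^{n-\bar n_x}(s-p_j)},
\]
whose remaining poles all have negative real parts by \hyperref[ass2]{A2)} and \hyperref[ass4]{A4)}. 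Let $N_z=m+n_x-\bar n_x$ and $N_p=m_x+n-\bar n_x$ denote the numerator and denominator degrees of $M$; then $N_p-N_z\ge 0$ by \hyperref[ass3]{A3)}.

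The key step is the substitution $u=1/\omega$: one has $d\omega/\omega^2=-du$, so the weighted integral reduces to $\int_{-\infty}^{\infty}\ln|M(j/u)|\,du$. For any $a\neq 0$, direct expansion gives the identity $|j/u-a|^2=|a|^2|ju+1/a|^2/u^2$. Applying this factor by factor to $\ln|M(j/u)|^2$ makes the $|a|^2$ constants collect into $\ln|M(0)|^2$ and the $1/u^2$ pieces collect into $(N_p-N_z)\ln u^2$, producing
\[
\ln|M(j/u)|^2\;=\;\ln|M(0)|^2+(N_p-N_z)\ln u^2+\sum_{\ell=1}^{N_z}\ln|ju+1/z_\ell|^2-\sum_{k=1}^{N_p}\ln|ju+1/p_k|^2,
\]
where $\{z_\ell\}$ and $\{p_k\}$ range over the zeros and poles of $M$.

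Once this decomposition is in hand, the constant $\ln|M(0)|^2\cdot\int du$ diverges over $\mathbb{R}$ unless $|M(0)|=1$; unfolding this equality with the explicit factorization of $M(0)$ yields exactly the moduli-equality condition in the first branch of \eqref{thm2_eq0}, and its failure produces the $\pm\infty$ branch. Under this condition, rewriting $(N_p-N_z)\ln u^2=(N_p-N_z)\ln|ju-0|^2$ treats the origin as $N_p-N_z$ ``virtual zeros'', so the integrand becomes a signed difference of $N_p$ numerator-type and $N_p$ denominator-type $\ln|ju-\cdot|^2$ terms. Pairing them off and invoking \hyperref[lem2]{Lemma~2} per pair gives
\[
\frac{1}{2\pi}\int_{-\infty}^{\infty}\ln|M(j\omega)|\,\frac{d\omega}{\omega^2}\;=\;\frac{1}{2}\Big[\sum_{\ell=1}^{N_z}|\mathrm{Re}(1/z_\ell)|-\sum_{k=1}^{N_p}|\mathrm{Re}(1/p_k)|\Big],
\]
because the virtual-zero contributions vanish ($|\mathrm{Re}(0)|=0$).

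To recover the stated form of \eqref{thm2_eq0}, I would use that every $p_k$ is stable with negative real part and appears in a conjugate pair, so $\sum_k|\mathrm{Re}(1/p_k)|=-\sum_k p_k^{-1}$ with the sum real, then split $\{z_\ell\}=\{z_j\}_{j=1}^{m}\cup\{p_i\}_{i=1}^{n_x-\bar n_x}$ and $\{p_k\}=\{z_i\}_{i=1}^{m_x}\cup\{p_j\}_{j=1}^{n-\bar n_x}$ and collect the four resulting sums. The main obstacle I anticipate is the bookkeeping of the substitution's relative-degree correction: the $(N_p-N_z)\ln u^2$ piece is individually non-integrable and only makes sense once the ``virtual zeros at the origin'' device is introduced to match the numerator and denominator counts before pairwise application of \hyperref[lem2]{Lemma~2} --- a subtlety not present in the proof of \hyperref[thm1]{Theorem~1}.
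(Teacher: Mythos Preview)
Your proposal is correct and follows essentially the same route as the paper: both substitute \eqref{TF_Gx} and \eqref{TF_FG} into $M(s)$ to obtain the same rational expression, then perform the reciprocal-frequency change of variables (the paper writes it as $\omega=-\tilde\omega^{-1}$, $s=\tilde s^{-1}$, you write $u=1/\omega$) to remove the $1/\omega^2$ weight, pull out the constant $M(0)$ to obtain the moduli-equality boundedness condition, and apply \hyperref[lem2]{Lemma~2} to the resulting reciprocal-root factors. Your ``virtual zeros at the origin'' device is exactly the paper's $\tilde s^{\varepsilon}$ factor in $\hat M(\tilde s)$ with $\varepsilon=N_p-N_z$, so there is no genuine obstacle there; the paper simply writes the factored $\hat M(\tilde s)$ directly and applies \hyperref[lem2]{Lemma~2} with $\alpha=0$ for those terms, rather than treating the $(N_p-N_z)\ln u^2$ piece separately and then reabsorbing it.
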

\begin{proof}[\textbf{Proof}]
	Plugging (\ref{TF_Gx})-(\ref{TF_FG}) into the continuous-time filtering sensitivity function $M(s)$ in \eqref{m_fun} gives
	\begin{equation}\label{thm2_eq1}
		M(s)=  \frac{K \cdot \prod_{i=1}^{n_x - \bar{n}_x}(s - p_i)\prod_{j=1}^{m}(s - z_j)}{K_x \cdot \prod_{i=1}^{m_x}(s - z_i) \prod_{j=1}^{n - \bar{n}_x}(s - p_j)  }  .
	\end{equation}
	To evaluate the weighted $M$-integral in \eqref{thm2_eq0}, consider the frequency transformation $\omega = -\tilde{\omega}^{-1}$, such that $s = j\omega = (j \tilde{\omega})^{-1} = \tilde{s}^{-1}$ and $M(s) = M(\tilde{s}^{-1}) = \tilde{M}(\tilde{s})$. By applying this frequency transformation to the continuous-time $M$-integral in \eqref{thm2_eq0}, we have 
	\begin{equation}\label{thm2_eq2}
		\int_{-\infty}^{\infty} \ln |M(s)| \frac{d\omega}{\omega^2} = \int_{-\infty}^{\infty} \ln |\tilde{M}(\tilde{s})| d\tilde{\omega},
	\end{equation}
	where the weighting function $1/\omega^2$ is eliminated on the RHS of \eqref{thm2_eq2}. Substituting $s = \tilde{s}^{-1}$ into \eqref{thm2_eq1} and multiplying the numerator and denominator by $\tilde{s}^{n+m_x - \bar{n}_x}$, we obtain $\tilde{M}(\tilde{s})$ in \eqref{thm2_eq2}
	\begin{align}\label{thm2_eq3}
		\tilde{M}(\tilde{s}) & = (-1)^{-\varepsilon} \cdot \frac{K  \prod_{i=1}^{n_x - \bar{n}_x}{p}_i    \prod_{j=1}^{m}z_j  }{K_x  \prod_{i=1}^{m_x}z_i  \prod_{j=1}^{n - \bar{n}_x}{p}_j } \cdot \hat{M}(\tilde{s}) ,
	\end{align}
	where $\varepsilon =  m_x +n - n_x - m \geq 0$, and
	\begin{equation*}\label{thm2_eq4}
		\hat{M}(\tilde{s}) = \frac{\tilde{s}^{\varepsilon} \cdot \prod_{i=1}^{n_x - \bar{n}_x}(\tilde{s} -  {p}^{-1}_i) \prod_{j=1}^{m}(\tilde{s}- z^{-1}_j ) }{\prod_{i=1}^{m_x}(\tilde{s} - z^{-1}_i)   \prod_{j=1}^{n-\bar{n}_x}(\tilde{s} - {p}^{-1}_j)}.
	\end{equation*}
	Plugging $\tilde{M}(\tilde{s})$ or \eqref{thm2_eq3} back into the RHS integral of \eqref{thm2_eq2}, the continuous-time $M$-integral then satisfies
	\begin{align}\label{thm2_eq5}
		\int_{-\infty}^{\infty} \ln|\tilde{M}(\tilde{s})| d\tilde{\omega} & = \int_{-\infty}^{\infty} \ln | \hat{M}(\tilde{s}) | d\tilde{\omega} +\int_{-\infty}^{\infty} \ln \left| \frac{K \prod_{i=1}^{n_x - \bar{n}_x}{p}_i  \prod_{j=1}^{m}z_j  }{K_x  \prod_{i=1}^{m_x}z_i  \prod_{j=1}^{n - \bar{n}_x}{p}_j }  \right| d\tilde{\omega}, \nonumber
	\end{align}
	which is bounded if and only if
	\begin{equation}\label{thm2_eq6}
		\left|  K  \prod_{i=1}^{n_x - \bar{n}_x}{p}_i \prod_{j=1}^{m}z_j \right| = \left|  K_x  \prod_{i=1}^{m_x}z_i   \prod_{j=1}^{n - \bar{n}_x}{p}_j    \right|.
	\end{equation}
	When condition \eqref{thm2_eq6} is violated, the $M$-integral in \eqref{thm2_eq0} and the integrals in \eqref{thm2_eq2} are unbounded, i.e., $\pm \infty$. When the condition \eqref{thm2_eq6} is fulfilled, by applying \hyperref[lem2]{Lemma~2} to the integral of $\hat{M}(\tilde{s})$ in \eqref{thm2_eq5}, we obtain the continuous-time filtering $M$-integral
	\begin{align}\label{thm2_eq7}
		\hspace{-5pt}\frac{1}{2\pi}\int_{-\infty}^{\infty} \ln |M(s)| \ \frac{d\omega}{\omega^2} = & \frac{1}{2} \cdot \bigg[ \sum_{j=1}^{n - \bar{n}_x} \textrm{Re} \ {p}^{-1}_j  - \sum_{i=1}^{n_x - \bar{n}_x} \textrm{Re} \ {p}^{-1}_i + \sum_{j=1}^{m}|\textrm{Re} \ z^{-1}_j|  - \sum_{i=1}^{m_x}|\textrm{Re} \  z^{-1}_i|  \bigg],
	\end{align}
	which implies \eqref{thm1_eq0} in \hyperref[thm2]{Theorem~2} with the fact that complex poles always appear as conjugate pairs. Meanwhile, by letting $\{\bar{z}_j\}_{j=1}^{\bar{m}}$ be the NMP zeros of $F(s)G_y(s)$ in \eqref{TF_FG}, or $\{{z}_j\}_{j=1}^{{m}}  = \{\bar{z}_j\}_{j=1}^{\bar{m}} \cup \{{z}_j\}_{j=1}^{{m - \bar{m}}}$, we can further expand or simplify the last two sums in \eqref{thm2_eq0} and \eqref{thm2_eq7} as $\sum_{i=1}^{m_x}|{\rm Re} \ z_i^{-1}| = \sum_{i=1}^{\bar{m}_x} \bar{z}_i^{-1} - \sum_{i=1}^{m_x - \bar{m}_x} z^{-1}_i$ and $\sum_{j=1}^{m}|\textrm{Re} \ z^{-1}_j| = \sum_{j=1}^{\bar{m}}\bar{z}_j^{-1} - \sum_{j=1}^{m - \bar{m}}z_j^{-1}$. This completes the proof. 
\end{proof}
\begin{remark}
	\hyperref[thm2]{Theorem~2} shows that in addition to the NMP zero sets $\mathcal{Z}_M \supset \{\bar{z}_j\}_{j=1}^{\bar{m}}$ and $\mathcal{Z}_x \supset \{\bar{z}_i\}_{i=1}^{\bar{m}_x}$ defined in \hyperref[lem1]{Lemma~1}, the value and boundedness of continuous-time filtering $M$-integral are also determined by the leading coefficients $K$ and $K_x$, MP zeros $\{z^{}_i\}_{i=1}^{m_x - \bar{m}_x}$ and $\{z^{}_j\}_{j=1}^{m - \bar{m}}$, and stable poles $\{p^{}_i\}_{i=1}^{n_x - \bar{n}_x}$ and $\{p^{}_j\}_{j=1}^{n - \bar{n}_x}$. More importantly, although keeping $|M(s)|$ small at high-frequency regime can be a good practice in filter design \cite{Goodwin_TAC_1997, Seron_2012}, this property also makes the continuous-time $M$-integral difficult to converge despite the aid of weighting function, which was not noticed in the previous papers. For a more distinguishable characterization of the sensitivity function $M(s)$, designers might resort to other types of trade-off integrals, such as the Poisson-type $M$-integral introduced in \cite{Goodwin_TAC_1997}. 
\end{remark}

%===============================================
%==== Section 4 Discrete-Time Filtering Trade-offs ====
%===============================================
\section{Discrete-Time Filtering Trade-offs}\label{sec4}
Discrete-time filtering $P$- and $M$-integrals, which have not been previously investigated, are analyzed in this section with the simplified method. Compared with the continuous-time trade-off integrals, since the frequency domain of discrete-time integrals is $\omega \in [-\pi, \pi]$, the boundedness of discrete-time trade-off integrals is more tractable to control.

%\subsection{Bode Integral for Sensitivity Function \hmath$P${\fontsize{11}{11}\rm (}\hmath$z${\fontsize{11}{11}\rm )}}
\subsection{Discrete-Time Filtering $P$-Integral}
The simplified analysis result of discrete-time filtering $P$-integral is presented in the following theorem. 
\begin{theorem}\label{thm3}
	For the discrete-time LTI filtering system described by plant models \eqref{DTF_Gx}, \eqref{DTF_Gy} and filter \eqref{DTF_F}, as \hyperref[fig1]{Fig.~1} shows, the Bode-type integral of filtering sensitivity function $P(z)$ defined in \eqref{p_fun} satisfies
	\begin{align}\label{thm3_eq0}
		\frac{1}{2\pi} \int_{-\pi}^{\pi} \log |P(z)| d\omega  =\begin{cases}
			\sum_{i=1}^{\bar{n}_x} \log |\bar{p}_i| - \sum_{i=1}^{\bar{m}_x} \log |\bar{z}_i| + \sum_{k=1}^{\bar{n}_k} \log|\bar{r}_k|,\\
			\hspace{210pt} {\rm if} \ m_x+n \geq n_x+m + 1;\\
			\sum_{i=1}^{\bar{n}_x}\log|\bar{p}_i| - \sum_{i=1}^{\bar{m}_x}\log|\bar{z}_i|  + \sum_{k=1}^{\bar{n}_k}\log |\bar{r}_k| + \log |(K_x-K) / K_x|, \\
			\hspace{205pt}  \hspace{26pt} {\rm if} \ m_x + n = n_x + m;\\
		\end{cases}
	\end{align}
	where $\{\bar{r}_k\}_{k=1}^{\bar{n}_k}$ stand for the NMP zeros of $G_x(z) - FG_y(z)$ excluding $\{\bar{p}_i\}_{i=1}^{\bar{n}_x}$.
\end{theorem}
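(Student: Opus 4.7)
The plan is to mirror the strategy of Theorem~1 but exploit the fact that \hyperref[lem3]{Lemma~3} distinguishes zeros solely by whether they lie outside the unit disk, which eliminates the coefficient-matching bookkeeping that drove the three-case split in the continuous-time proof. First I would split the integrand via
\begin{equation*}
\int_{-\pi}^{\pi} \log|P(z)| \, d\omega = \int_{-\pi}^{\pi} \log|G_x(z) - F(z)G_y(z)| \, d\omega + \int_{-\pi}^{\pi} \log|G_x^{-1}(z)| \, d\omega,
\end{equation*}
then substitute \eqref{DTF_Gx}, \eqref{DTF_Gy}, \eqref{DTF_F}, and \eqref{DTF_FG} to write $G_x - FG_y = \Gamma_d(z)/\Lambda_d(z)$ with numerator
\begin{equation*}
\Gamma_d(z) = K_x \prod_{i=1}^{m_x}(z - z_i) \prod_{j=1}^{n - \bar{n}_x}(z - p_j) - K \prod_{i=1}^{n_x - \bar{n}_x}(z - p_i) \prod_{j=1}^{m}(z - z_j),
\end{equation*}
and denominator $\Lambda_d(z) = \prod_{i=1}^{\bar{n}_x}(z - \bar{p}_i) \prod_{i=1}^{n_x - \bar{n}_x}(z - p_i) \prod_{j=1}^{n - \bar{n}_x}(z - p_j)$. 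The properness assumption \hyperref[ass3]{A3)} gives $m_x + n \geq n_x + m$, so the two degree regimes in \eqref{thm3_eq0} are exhaustive.

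Next I would factorize $\Gamma_d(z)$. By \hyperref[ass4]{A4)} together with \hyperref[rem1]{Remark~1}, the shared unstable poles $\{\bar{p}_i\}_{i=1}^{\bar{n}_x}$ of $G_x$ and $G_y$ must appear as zeros of $G_x - FG_y$, so they can be pulled out as a factor $\prod_{i=1}^{\bar{n}_x}(z - \bar{p}_i)$. In the regime $m_x + n \geq n_x + m + 1$ the leading term is contributed solely by $K_x$, giving
\begin{equation*}
\Gamma_d(z) = K_x \prod_{i=1}^{\bar{n}_x}(z - \bar{p}_i) \prod_{k=1}^{m_x + n - 2\bar{n}_x}(z - r_k),
\end{equation*}
while in the regime $m_x + n = n_x + m$ the leading coefficient becomes $K_x - K$. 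After cancellation with $G_x^{-1}$, the sensitivity function simplifies to
\begin{equation*}
P(z) = c \cdot \frac{\prod_{i=1}^{\bar{n}_x}(z - \bar{p}_i) \prod_{k=1}^{m_x + n - 2\bar{n}_x}(z - r_k)}{\prod_{i=1}^{m_x}(z - z_i) \prod_{j=1}^{n - \bar{n}_x}(z - p_j)},
\end{equation*}
with $c = 1$ in the first regime and $c = (K_x - K)/K_x$ in the second.

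Finally, I would take the logarithm and integrate term by term, invoking \hyperref[lem3]{Lemma~3} on each factor $(z - \alpha)$. Since $\{p_i\}_{i=1}^{n_x - \bar{n}_x}$ and $\{p_j\}_{j=1}^{n - \bar{n}_x}$ are stable (inside the unit disk) and the MP zeros in $\{z_i\}$ and $\{r_k\}$ likewise satisfy $|\cdot| \leq 1$, all such factors contribute zero. The only surviving contributions come from $\{\bar{p}_i\}$, the NMP zeros $\{\bar{z}_i\}$ of $G_x$ (appearing in the denominator of $P$ with a minus sign), and the NMP zeros $\{\bar{r}_k\}$ of $G_x - FG_y$ that differ from $\{\bar{p}_i\}$. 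Splitting $\sum_{k=1}^{m_x+n-2\bar{n}_x} \log|r_k|$ into its NMP and MP parts via \hyperref[lem3]{Lemma~3} directly yields the two cases of \eqref{thm3_eq0}; the constant $c$ simply adds $\log|(K_x - K)/K_x|$ in the biproper regime.

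The main obstacle is largely bookkeeping rather than conceptual: I must ensure the degree of $\Gamma_d$ is exactly $m_x + n - \bar{n}_x$ in both regimes so that the residual factor $\prod_{k=1}^{m_x + n - 2\bar{n}_x}(z - r_k)$ has the claimed multiplicity, and confirm that the $\{\bar{p}_i\}$ factored out from $\Gamma_d$ indeed cancel with those in $\Lambda_d$ so that the surviving denominator poles are all stable. Unlike the continuous-time case, no coefficient-matching identity (of the type in \eqref{thm1_eq5} or \eqref{thm1_eq12}) is required, since \hyperref[lem3]{Lemma~3} is insensitive to the arithmetic of the roots and the finite frequency range $[-\pi, \pi]$ keeps the integrals bounded in every case.
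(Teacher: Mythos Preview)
Your proposal is correct and mirrors the paper's proof essentially step for step: the same splitting of $\log|P|$, the same factorization of $\Gamma_d(z)$ with $\{\bar{p}_i\}$ extracted via \hyperref[ass4]{A4)}/\hyperref[rem1]{Remark~1}, the same two-case split on the degree of $\Gamma_d$, and the same direct application of \hyperref[lem3]{Lemma~3} to each linear factor. Your observation that no coefficient-matching identity is needed here (in contrast to \eqref{thm1_eq5} and \eqref{thm1_eq12}) is exactly the simplification the paper exploits.
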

\begin{proof}[\textbf{Proof}]
	Plugging the filtering sensitivity function $P(z)$ or \eqref{p_fun} into the discrete-time $P$-integral defined in \eqref{thm3_eq0} gives
	\begin{equation}\label{thm3_eq1}
		\int_{-\pi}^{\pi} \log|P(z)| d\omega = \int_{-\pi}^{\pi}(\log|G_x - FG_y| + \log |G_x^{-1}|) d\omega.
	\end{equation}
	By substituting (\ref{DTF_Gx}-\ref{DTF_FG}) into \eqref{thm3_eq1}, we have $G_x(z) - F(z) \cdot G_y(z) = \Gamma_d(z) / \Lambda_d(z)$, where
	\begin{subequations}
		\begin{align}
			\Gamma_d(z) &= K_x \prod_{i=1}^{m_x}(z - z_i) \prod_{j=1}^{n - \bar{n}_x}(z - p_j) 	- K  \prod_{i=1}^{n_x- \bar{n}_x}(z - p_i)  \prod_{j=1}^{m}(z - z_j),    \label{thm3_eq2}\\
			\Lambda_d(z) & = \prod_{i=1}^{\bar{n}_x}(z - \bar{p}_i) \prod_{i=1}^{n_x - \bar{n}_x}(z-  {p}_i) \prod_{j=1}^{n - \bar{n}_x}(z - p_j). \label{thm3_eq3}
		\end{align}
	\end{subequations}
	Since $F(z)G_y(z)G_x^{-1}(z)$ is proper according to \hyperref[ass3]{A3)}, we have $m_x + n \geq n_x + m$, which categorizes the discrete-time $P$-integral into the following two scenarios. 
	
	\vspace{0.5em}
	
	\noindent {\underline{Case 1}}: $m_x+n \geq n_x + m + 1$. After we combine and factorize the polynomials in \eqref{thm3_eq2}, $\Gamma_d(z)$ becomes
	\begin{equation}\label{thm3_eq4}
		\Gamma_d(z) = K_x \cdot \prod_{i=1}^{\bar{n}_x}(z - \bar{p}_i) \prod_{k=1}^{m_x + n - 2\bar{n}_x}(z - r_k),
	\end{equation}
	where according to \hyperref[ass4]{A4)} and \hyperref[rem1]{Remark~1}, unstable poles $\{\bar{p}_i\}_{i=1}^{\bar{n}_x}$, shared by $G_x(z)$ and $G_y(z)$, must be canceled by the NMP zeros of $G_x(z) - F(z)G_y(z)$ and thus appear in $\Gamma_d(z)$, and $\{r_k\}_{k=1}^{m_x + n - 2\bar{n}_x}$ denote the remaining zeros. Substituting \eqref{DTF_Gx}, \eqref{thm3_eq3} and \eqref{thm3_eq4} into the filtering sensitivity function $P(z) = [G_x(z) - F(z)G_y(z)]G_x^{-1}(z)$ yields
	\begin{equation}\label{thm3_eq5}
		P(z) = \frac{ \prod_{i=1}^{\bar{n}_x}(z - \bar{p}_i)  \prod_{k=1}^{n+m_x - 2\bar{n}_x}(z - r_k)  }{ \prod_{i=1}^{m_x}(z - z_i)   \prod_{j=1}^{n - \bar{n}_x}(z - p_j)}. 
	\end{equation}
	Plugging \eqref{thm3_eq5} into the discrete-time $P$-integral defined in \eqref{thm3_eq0} and applying \hyperref[lem3]{Lemma~3} to the integral, we have 
	\begin{equation}\label{thm3_eq6}
		\frac{1}{2\pi} \int_{-\pi}^{\pi} \log |P(z)| d\omega = \sum_{i=1}^{\bar{n}_x} \log |\bar{p}_i|   - \sum_{i=1}^{\bar{m}_x} \log |\bar{z}_i| +  \sum_{k=1}^{\bar{n}_k} \log|\bar{r}_k|
	\end{equation}
	where $\{\bar{r}_k\}_{k=1}^{\bar{n}_k}$ denote the NMP zeros in $\{r_k\}_{k=1}^{m_x + n - 2\bar{n}_x}$, or equivalently, NMP zeros of $G_x - FG_y$ excluding $\{ \bar{p}_i \}_{i=1}^{\bar{n}_k}$.
	
	\noindent {\underline{Case 2}}: $m_x + n = n_x + m$. In this case, combining and factorizing the polynomials in \eqref{thm3_eq2} give
	\begin{equation}\label{thm3_eq7}
		\Gamma_d(z) = (K_x - K) \cdot \prod_{i=1}^{\bar{n}_x}(z - \bar{p}_i) \prod_{k=1}^{n+m_x - 2\bar{n}_x}(z - r_k),
	\end{equation}
	where $\{\bar{p}_i\}_{i=1}^{\bar{n}_x}$ and $\{r_k\}_{k=1}^{n+m_x - 2\bar{n}_x}$ follow the same definitions and reasoning as in \eqref{thm3_eq4}. Plugging \eqref{DTF_Gx}, \eqref{thm3_eq3} and \eqref{thm3_eq7} into $P(z) = [G_x(z) - F(z)G_y(z)] G_x^{-1}(z)$ or \eqref{p_fun}, the discrete-time filtering sensitivity function $P(z)$ becomes
	\begin{equation}\label{thm3_eq8}
		P(z) = \frac{(K_x - K) \cdot \prod_{i=1}^{\bar{n}_x}(z - \bar{p}_i)  \prod_{k=1}^{n+m_x -2 \bar{n}_x}(z- r_k) }{K_x \cdot \prod_{i=1}^{m_x}(z - z_i)   \prod_{j=1}^{n - \bar{n}_x}(z - p_j)}.
	\end{equation}
	By substituting \eqref{thm3_eq8} into the discrete-time $P$-integral defined in \eqref{thm3_eq0} and applying \hyperref[lem3]{Lemma~3} to the integral, we have
	\begin{align}\label{thm3_eq9}
		 \frac{1}{2\pi} \int_{-\pi}^{\pi} \log |M(z)| d\omega = \sum_{i=1}^{\bar{n}_x}\log|\bar{p}_i|  - \sum_{i=1}^{\bar{m}_x}\log|\bar{z}_i|  + \sum_{k=1}^{\bar{n}_k}\log |\bar{r}_k| +\log \left|  \frac{K_x-K}{K_x}\right|. 
	\end{align}
	Summarizing \eqref{thm3_eq6} and \eqref{thm3_eq9}, we obtain \eqref{thm3_eq0} in \hyperref[thm3]{Theorem~3}.
\end{proof}

%\subsection{Bode Integral for Sensitivity Function \hmath$M${\fontsize{11}{11}\rm (}\hmath$z${\fontsize{11}{11}\rm )}}
\subsection{Discrete-Time Filtering $M$-Integral}
The following theorem presents the simplified analysis result of the discrete-time filtering $M$-integral.
\begin{theorem}\label{thm4}
	For the discrete-time LTI filtering system described by plant models \eqref{DTF_Gx}, \eqref{DTF_Gy} and filter \eqref{DTF_F}, as \hyperref[fig1]{Fig.~1} shows, the Bode-type integral of filtering sensitivity function $M(z)$ defined in \eqref{m_fun} satisfies
	\begin{align}\label{thm4_eq0}
		\frac{1}{2\pi} \int_{-\pi}^{\pi} \log |M(z)| d\omega = \sum_{j=1}^{\bar{m}}\log|\bar{z}_j| - \sum_{i=1}^{\bar{m}_x}\log|\bar{z}_i| + \log\left| \frac{K}{K_x}\right| ,
	\end{align}
	where $\{\bar{z}_j\}_{j=1}^{\bar{m}}$ denote the NMP zeros of $F(z)G_y(z)$. 
\end{theorem}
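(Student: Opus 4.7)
The plan is to mirror the strategy used for Theorems 2 and 3, but the discrete-time $M$-integral is considerably simpler than its continuous-time counterpart because the frequency range is bounded, so no weighting function and no frequency inversion trick are required, and simpler than the discrete-time $P$-integral because $M(z)$ is given directly as a ratio of factored polynomials (no difference $G_x - FG_y$ needs to be recombined, hence no case distinction based on relative degrees).

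First I would substitute the factored transfer functions \eqref{DTF_Gx} and \eqref{DTF_FG} directly into the definition $M(z) = F(z)G_y(z)G_x^{-1}(z)$ in \eqref{m_fun}. The common factor $\prod_{i=1}^{\bar{n}_x}(z-\bar{p}_i)$ that appears in the denominators of both $F(z)G_y(z)$ and $G_x(z)$ cancels cleanly, yielding
\begin{equation*}
M(z) = \frac{K \cdot \prod_{i=1}^{n_x-\bar{n}_x}(z-p_i)\prod_{j=1}^{m}(z-z_j)}{K_x \cdot \prod_{i=1}^{m_x}(z-z_i)\prod_{j=1}^{n-\bar{n}_x}(z-p_j)},
\end{equation*}
which is the same expression that appeared as \eqref{thm2_eq1} in the continuous-time analysis but now regarded as a function of $z$.

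Next I would take $\log|M(z)|$, split the integral into a sum over the individual factors, pull out the constant term $\log|K/K_x|$, and apply \hyperref[lem3]{Lemma~3} factor-by-factor. Lemma~3 implies that $\frac{1}{2\pi}\int_{-\pi}^{\pi}\log|e^{j\omega}-\alpha|d\omega$ equals $\log|\alpha|$ when $|\alpha|>1$ and $0$ when $|\alpha|\leq 1$. All stable poles $\{p_i\}_{i=1}^{n_x-\bar{n}_x}$ and $\{p_j\}_{j=1}^{n-\bar{n}_x}$ lie inside the unit disk by \hyperref[ass2]{A2)} and the cancellation built into \eqref{DTF_Gy}--\eqref{DTF_FG}, so every pole factor contributes zero. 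Similarly, every MP zero factor—those $z_i$ in $\{z_i\}_{i=1}^{m_x-\bar{m}_x}$ and those $z_j$ in $\{z_j\}_{j=1}^{m-\bar{m}}$ inside the unit disk—contributes zero. Only the NMP zeros survive: the zeros $\{\bar{z}_i\}_{i=1}^{\bar{m}_x}$ of $G_x(z)$ in the denominator contribute $-\sum_{i=1}^{\bar{m}_x}\log|\bar{z}_i|$, and the zeros $\{\bar{z}_j\}_{j=1}^{\bar{m}}$ of $F(z)G_y(z)$ in the numerator contribute $+\sum_{j=1}^{\bar{m}}\log|\bar{z}_j|$. Combining these with the leading-coefficient term recovers \eqref{thm4_eq0}.

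There is no real obstacle; the only point that requires care is bookkeeping around which zeros of the numerator and denominator are MP versus NMP, and noting that the unstable poles $\{\bar{p}_i\}_{i=1}^{\bar{n}_x}$ never enter the final formula because they cancel before any integration is performed. In particular, unlike Theorem~3, no case split on $m_x+n$ vs.\ $n_x+m$ is needed because $M(z)$ is already a ratio of products and the leading coefficient $K/K_x$ is not obscured by an additive combination of polynomials.
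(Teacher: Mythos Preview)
Your proposal is correct and follows essentially the same approach as the paper: substitute \eqref{DTF_Gx} and \eqref{DTF_FG} into $M(z)=F(z)G_y(z)G_x^{-1}(z)$ to obtain the factored ratio, then apply \hyperref[lem3]{Lemma~3} term-by-term so that only the NMP zeros and the leading-coefficient ratio survive. Your write-up is in fact more explicit than the paper's own proof, which simply states that plugging the factored $M(z)$ into the integral and applying Lemma~3 yields \eqref{thm4_eq0}.
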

\begin{proof}[\textbf{Proof}]
	By substituting (\ref{DTF_Gx}-\ref{DTF_FG}) into the filtering sensitivity function $M(z)$ in \eqref{m_fun}, we have
	\begin{equation}\label{thm4_eq1}
		M(z) = \frac{K \cdot  \prod_{i=1}^{n_x - \bar{n}_x}(z - p_i) \prod_{j=1}^{m}(z - z_j) }{K_x \cdot  \prod_{i=1}^{m_x}(z - z_i)  \prod_{j=1}^{n - \bar{n}_x} (z - p_j) }. 
	\end{equation}
	Plugging \eqref{thm4_eq1} into the discrete-time $M$-integral in \eqref{thm4_eq0} and applying \hyperref[lem3]{Lemma~3} to the integral give \eqref{thm4_eq0} in \hyperref[thm4]{Theorem~4}. 
\end{proof}
\begin{remark}
	Similar to the results of continuous-time $P$- and $M$-integrals in \hyperref[thm1]{Theorems~1}, \hyperref[thm2]{2}, and \hyperref[lem1]{Lemma~1}, \hyperref[thm3]{Theorems~3} and~\hyperref[thm4]{4} show that the values of discrete-time $P$- and $M$-integrals are determined by the leading coefficients $K$, $K_x$, relative degrees of plants and filter, unstable poles of $G_x(z)$, as well as NMP zeros of $G_x(z)$, $F(z)G_y(z)$, and $G_x(z) -F(z)G_y(z)$. Meanwhile, with bounded frequency range, the discrete-time filtering trade-off integrals always converge despite the choices of plant models and filters.
\end{remark}

\section{Illustrative Examples}\label{sec5}
Illustrative examples are presented in this section to verify the validity and correctness of the simplified analysis results in \hyperref[thm1]{Theorems~1}-\hyperref[thm4]{4}. To verify \hyperref[thm1]{Theorem~1}, we choose the plant $G_x(s) =  1.67 \cdot (s+0.05) / (s+0.04)$ with $m_x = n_x = 1$. For the first case when $m_x + n > n_x + m + 1$, consider the plant $G_y(s) = 1.25\cdot (s+0.06)(s+0.08) / [(s-0.03)(s+0.01)(s+0.07)]$ and filter $F(s) = 1.34 \cdot (s-0.03)(s+0.09) / (s+0.68)^3$ with $m = 3$ and $n = 5$. Substituting $G_x(s)$, $F(s)$, and $G_y(s)$ into the continuous-time filtering $P$-integral in \eqref{thm1_eq0}, the numerical computation of $P$-integral gives $(1/2\pi) \cdot \int_{-\infty}^{\infty} \ln|P(s)| d\omega \approx 0.0101$, and by applying \hyperref[thm1]{Theorem~1}, the theoretical result of simplified analysis is $\sum_{k=1}^{\bar{m}_k}{\rm Re} \ \bar{r}_k + \sum_{i=1}^{\bar{n}_x} {\rm Re} \ \bar{p}_i - \sum_{i=1}^{\bar{m}_x}{\rm Re} \ \bar{z}_i = 0.0101$, which matches the numerical result. For the second case when $m_x + n = n_x + m + 1$, we consider the same $G_x(s)$, $G_y(s)$ as in the first case and filter $F(s) = 1.34 \cdot (s - 0.03)(s+0.075)(s+0.09)/(s+0.68)^3$ such that $m_x = n_x = 1$, $m = 4$ and $n = 5$. Numerical computation of the $P$-integral gives $-0.5015$, and the simplified analysis shows $\sum_{k=1}^{\bar{n}_k} {\rm Re \ }\bar{r}_k + \sum_{i=1}^{\bar{n}_z} {\rm Re \ }\bar{p}^{}_i - \sum_{i=1}^{\bar{m}_x}{\rm Re \ }\bar{z}_i  - {K}/{2K_x} = - 1.25 \times 1.34 / (2 \times 1.67) \approx -0.5015$. For the third case when $m_x + n = n_x + m$ and $K = 2K_x$, we keep $G_x(s)$ unaltered and let $G_y(s) = 1.25 \cdot (s+0.06)(s+0.08) / [(s-0.03)(s+0.01)]$ and $F(s) = 2.672 \cdot (s-0.03)(s+0.075)(s+0.09) / (s+0.68)^3$. Numerical computation of the $P$-integral gives $(1/2\pi) \cdot \int_{-\infty}^{\infty} \ln|P(s)| d\omega \approx 0.3991$, and the simplified analysis in \hyperref[thm1]{Theorem~1} also gives $\sum_{k=1}^{\bar{m}_k} \bar{r}_k + \sum_{i=1}^{m_x - \bar{m}_x} z_i + \sum_{i=1}^{\bar{n}_x} \bar{p}_i - \sum_{i=1}^{n_x - \bar{n}_x}p_i - \sum_{j=1}^{m}z_j + \sum_{j=1}^{n - \bar{n}_x}p_j \approx 0.3991$. For the last scenario when $m_x+n = n_x+m$ and $K \ne 2K_x$, we choose the same $G_x(s)$ and $G_y(s)$ as in the third case and the filter $F(s) = 2.872 \cdot (s-0.03)(s+0.075)(s+0.09) / (s+0.68)^3$. Both numerical computation and simplified analysis indicate that the $P$-integral is $+\infty$ or unbounded.

To verify \hyperref[thm2]{Theorem~2}, consider the plants $G_x(s) = 1.5 \cdot (s+0.05) / (s+0.025)$ and $G_y(s) = 1.25 \cdot (s+0.075)(s+0.75) / [(s-0.05)(s+0.01)]$. For the case when $|  K  \prod_{i=1}^{n_x - \bar{n}_x}{p}_i   \prod_{j=1}^{m}z_j   | = |  K_x\prod_{i=1}^{m_x}z_i \prod_{j=1}^{n - \bar{n}_x}{p}_j  |$, let  $F(s) = (32/15) \cdot (s-0.05)(s+0.1)(s+0.25) /  (s+0.5)^3$. Numerical computation of the $M$-integral gives $({1}/{2\pi}) \cdot \int_{-\infty}^{\infty} \log|M(s)|/\omega^2 \ d\omega \approx -28.6667$, and simplified analysis shows that $({1} / {2}) \cdot  [\sum_{j=1}^{n - \bar{n}_x}  {p}^{-1}_j -\sum_{i=1}^{n_x - \bar{n}_x} {p}^{-1}_i  +  \sum_{j=1}^{m}|{\rm Re} \ z^{-1}_j| - \sum_{i=1}^{m_x}|{\rm Re} \  z^{-1}_i| ] = -86 / 3 \approx -28.6667$. For the case when $|  K  \prod_{i=1}^{n_x - \bar{n}_x}{p}_i   \prod_{j=1}^{m}z_j   | \neq |  K_x  \prod_{i=1}^{m_x}z_i  \cdot \prod_{j=1}^{n - \bar{n}_x}{p}_j  |$, we only alter the filter to $F(s) = 2 \cdot (s-0.05)(s+0.1)(s+0.25) / (s+0.5)^3$, and both numerical computation and simplified analysis show that the $M$-integral is $-\infty$.

We then verify the result of discrete-time filtering $P$-integral in \hyperref[thm3]{Theorem~3}. For the first case when $m_x + n \geq n_x + m + 1$, consider the plant models $G_x(z) = 1.5 \cdot (z-0.05) / (z-0.1)$, $G_y(z) = 1.25 \cdot (z-0.075)(z-0.025) / [(z-1.25)(z-0.75)(z-0.01)]$, and filter $F(z) = 1.75 \cdot (z-1.25)(z-0.25) / (z-0.5)^3$ with $m_x=n_x = 1$, $m = 3$, and $n= 5$. Numerical computation of $P$-integral shows that $({1}/{2\pi}) \cdot \int_{-\pi}^{\pi} \log|P(z)| d\omega \approx 1.0512$, and simplified analysis gives that $\sum_{i=1}^{\bar{n}_x} \log |\bar{p}_i| - \sum_{i=1}^{\bar{m}_x} \log |\bar{z}_i| + \sum_{k=1}^{\bar{n}_k} \log|\bar{r}_k| \approx \log(2.0722) \approx 1.0512$. For the second case when $m_x + n = n_x + m$, we keep $G_x(z)$ unaltered and consider the plant $G_y(z) = 1.25\cdot (z-0.075)(z-0.025) / [(z-1.25)(z-0.01)]$ and filter $F(z) = 1.75 \cdot (z-1.25)(z-0.75)(z-0.25) / (z-0.5)^3$ such that $m_x = n_x = 1$ and $m = n = 4$. Numerical computation gives that $({1}/{2\pi}) \cdot \int_{-\pi}^{\pi} \log|P(z)| d\omega \approx -1.1255$, and simplified analysis implies that $\sum_{i=1}^{\bar{n}_x}\log|\bar{p}_i| - \sum_{i=1}^{\bar{m}_x}\log|\bar{z}_i|  + \sum_{k=1}^{\bar{n}_k}\log |\bar{r}_k| + \log |(K_x-K) / K_x| =  \log|(1.5 - 1.25 \times 1.75) / 1.5| = -1.1255$.

To verify the result of discrete-time filtering $M$-integral in \hyperref[thm4]{Theorem~4}, consider the plant models $G_x(z) = 1.5\cdot(z-0.05)/(z-0.1)$, $G_y(z) = 1.25 \cdot (z-0.075)(z-0.025) / [(z-1.25)(z-0.01)]$, and filter $F(z) = 1.75 \cdot (z-1.25)(z-0.75)(z-0.25) / (z-0.5)^3$ such that $m_x = n_x = 1$ and $m = n = 4$. Numerical computation of $M$-integral shows that $(1/2\pi) \cdot \int_{-\pi}^{\pi} \log |M(z)| d\omega \approx 0.5443$, and the simplified analysis shows that $\sum_{j=1}^{\bar{m}}\log|\bar{z}_j| - \sum_{i=1}^{\bar{m}_x}\log|\bar{z}_i| + \log| {K} / {K_x}| = \log|1.75 \times 1.25 / 1.5| \approx 0.5443$.

%\underline{Verifying Theorem 4.} Considering $G_x = \frac{1.5(z-0.05)}{(z-0.1)}$, $G_y = \frac{1.25(z-0.025)(z-0.075)}{(z-1.25)(z-0.01)}$, $F = \frac{1.75(z-1.25)(z-0.25)(z-0.75)}{(z-0.5)^3}$. Here $n_x = 1, \bar{n}_x = 0, m_x = 1, \bar{m}_x = 0, m = 4, n = 4$. Numerical computation of $\frac{1}{2\pi}\int_{-\pi}^{\pi} \log|M(z)| d\omega$ gives 0.5443. Using Theorem 4, the result is 0.5443.

\section{Conclusion and Discussion}\label{sec6}
A comprehensive analysis of continuous- and discrete-time filtering sensitivity integrals was performed by relying on the simplified approach. With less restrictive assumptions, new insights and hidden factors that determine the filtering sensitivity trade-offs were disclosed. Illustrative examples were presented and verified the simplified analysis results. By extending the simplified method to non-rational transfer functions, e.g., transfer functions with time delay, we can also conduct a simplified analysis on the sensitivity trade-off integrals of smoothing and prediction problems. Meanwhile, generalizing \hyperref[lem1]{Lemma~1} to the discrete-time filtering systems with complex analysis tools is also an open problem.

\section*{Acknowledgment}
This work was partially supported by AFOSR and NSF. The authors would specially acknowledge the readers and staff on arXiv.org.

\bibliographystyle{IEEEtran}
\bibliography{ref}

% Generated by IEEEtran.bst, version: 1.14 (2015/08/26)
\begin{thebibliography}{10}
\providecommand{\url}[1]{#1}
\csname url@samestyle\endcsname
\providecommand{\newblock}{\relax}
\providecommand{\bibinfo}[2]{#2}
\providecommand{\BIBentrySTDinterwordspacing}{\spaceskip=0pt\relax}
\providecommand{\BIBentryALTinterwordstretchfactor}{4}
\providecommand{\BIBentryALTinterwordspacing}{\spaceskip=\fontdimen2\font plus
\BIBentryALTinterwordstretchfactor\fontdimen3\font minus
  \fontdimen4\font\relax}
\providecommand{\BIBforeignlanguage}[2]{{%
\expandafter\ifx\csname l@#1\endcsname\relax
\typeout{** WARNING: IEEEtran.bst: No hyphenation pattern has been}%
\typeout{** loaded for the language `#1'. Using the pattern for}%
\typeout{** the default language instead.}%
\else
\language=\csname l@#1\endcsname
\fi
#2}}
\providecommand{\BIBdecl}{\relax}
\BIBdecl

\bibitem{Goodwin_Auto_1995}
G.~C. Goodwin, D.~Q. Mayne, and J.~Shim, ``Trade-offs in linear filter
  design,'' \emph{Automatica}, vol.~31, no.~10, pp. 1367--1376, 1995.

\bibitem{Seron_CDC_1995}
M.~M. Seron and G.~C. Goodwin, ``Design limitations in linear filtering,'' in
  \emph{Proc. 34th IEEE Conf. Decision Contr.}, New Orleans, LA, USA, 1995.

\bibitem{Braslavsky_CDC_1996}
J.~H. Braslavsky, M.~M. Seron, G.~C. Goodwin, and R.~W. Grainger, ``Tradeoffs
  in multivariable filter design with applications to fault detection,'' in
  \emph{Proc. 35th IEEE Conf. Decision Contr.}, Kobe, Japan, 1996.

\bibitem{Goodwin_TAC_1997}
G.~C. Goodwin and M.~M. Seron, ``Fundamental design tradeoffs in filtering,
  prediction, and smoothing,'' \emph{IEEE Trans. Autom. Control}, vol.~42,
  no.~9, pp. 1240--1251, 1997.

\bibitem{Carrasco_Auto_2014}
D.~S. Carrasco and G.~C. Goodwin, ``Connecting filtering and control
  sensitivity functions,'' \emph{Automatica}, vol.~50, no.~12, pp. 3319--3322,
  2014.

\bibitem{Braslavsky_Auto_1999}
J.~H. Braslavsky, M.~M. Seron, D.~Q. Mayne, and P.~V. Kokotovi\'c, ``Limiting
  performance of optimal linear filters,'' \emph{Automatica}, vol.~35, pp.
  189--199, 1999.

\bibitem{Li_Auto_2016}
X.~Li, J.~Lam, H.~Gao, and J.~Xiong, ``${H}_\infty$ and ${H}_2$ filtering for
  linear systems with uncertain {M}arkov transistions,'' \emph{Automatica},
  vol.~67, pp. 252--266, 2016.

\bibitem{Wan_arxiv_2022}
N.~Wan, D.~Li, and N.~Hovakimyan, ``Fundamental limitations of control and
  filtering in continuous-time systems: An information-theoretic analysis,''
  \emph{arXiv{\rm: 2201.00995}}, 2022.

\bibitem{Fang_CDC_2017}
S.~Fang, J.~Chen, and H.~Ishii, ``Fundamental error bounds in state estimation:
  An information-theoretic analysis,'' in \emph{Proc. 56th IEEE Conf. Decision
  Contr.}, Melbourne, Australia, 2017.

\bibitem{Wu_TAC_1992}
B.~Wu and E.~A. Jonckheere, ``A simplified approach to {B}ode's theorem for
  continuous-time and discrete-time systems,'' \emph{{IEEE} Trans. Autom.
  Control}, vol.~37, no.~11, pp. 1797--1802, 1992.

\bibitem{Wan_TAC_2020}
N.~Wan, D.~Li, and N.~Hovakimyan, ``A simplified approach to analyze
  complementary sensitivity tradeoffs in continuous-time and discrete-time
  systems,'' \emph{{IEEE} Trans. Autom. Control}, vol.~65, no.~4, pp.
  1697--1703, 2020.

\bibitem{Bode_1945}
H.~W. Bode, \emph{Network Analysis and Feedback Amplifier Design}.\hskip 1em
  plus 0.5em minus 0.4em\relax D. Van Nostrand, 1945.

\bibitem{Freudenberg_1985}
J.~S. Freudenberg and D.~P. Looze, ``Right half plane poles and zeros and
  design tradeoffs in feedback systems,'' \emph{IEEE Trans. Autom. Control},
  vol.~30, no.~6, pp. 555--565, 1985.

\bibitem{Sung_IJC_1988}
H.~K. Sung and S.~Hara, ``Properties of sensitivity and complementary
  sensitivity functions in single-input single output digital control
  systems,'' \emph{Int. J. Control}, vol.~48, no.~6, pp. 2429--2439, 1988.

\bibitem{Sung_IJC_1989}
------, ``Properties of complementary sensitivity function in {SISO} digital
  control systems,'' \emph{Int. J. Control}, vol.~50, no.~4, pp. 1283--1295,
  1989.

\bibitem{Middleton_1991}
R.~H. Middleton, ``Trade-offs in linear control system design,''
  \emph{Automatica}, vol.~27, no.~2, pp. 281--292, 1991.

\bibitem{Seron_2012}
M.~M. Seron, J.~H. Braslavsky, and G.~C. Goodwin, \emph{Fundamental Limitations
  in Filtering and Control}.\hskip 1em plus 0.5em minus 0.4em\relax Springer,
  2012.

\bibitem{Wan_SCL_2019}
N.~Wan, D.~Li, and N.~Hovakimyan, ``Sensitivity analysis of linear
  continuous-time feedback systems subject to control and measurement noise: An
  information-theoretic approach,'' \emph{Syst. \& Control Lett.}, vol. 133, p.
  104548, 2019.

\end{thebibliography}

\end{document}